\newtheorem{lemma}{Lemma}
\newtheorem{skippingrule}{Skipping Rule}
\newcommand{\eucdist}[2]{\|#1 \; #2\|}
\newcommand{\eucdistmin}[2]{\|#1 \; #2\|_{\mathit{min}}}
\newcommand{\stitle}[1]{\vspace{0.8ex}\noindent{\bf #1.}$ $}
\newcounter{example}[section]
\renewcommand{\theexample}{\nthesection.\arabic{example}}
\newenvironment{example}{
	\refstepcounter{example}
	{\noindent\bf  Example  \theexample:}}{
	\eop\vspace{1ex}}
\newcommand{\eop}{\hspace*{\fill}\mbox{$\Box$}}
\newcommand{\nthesection}{\arabic{section}}
\begin{document}

% Copyright
%\setcopyright{acmcopyright}
%\setcopyright{acmlicensed}
%\setcopyright{rightsretained}
%\setcopyright{usgov}
%\setcopyright{usgovmixed}
%\setcopyright{cagov}
%\setcopyright{cagovmixed}

% DOI
%\doi{10.475/123_4}

% ISBN
%\isbn{123-4567-24-567/08/06}

%Conference
%\conferenceinfo{CIKM '16}{}

%\acmPrice{\$15.00}

%
% --- Author Metadata here ---
%\conferenceinfo{WOODSTOCK}{'97 El Paso, Texas USA}
%\CopyrightYear{2007} % Allows default copyright year (20XX) to be over-ridden - IF NEED BE.
%\crdata{0-12345-67-8/90/01}  % Allows default copyright data (0-89791-88-6/97/05) to be over-ridden - IF NEED BE.
% --- End of Author Metadata ---

\title{A Density-Based Approach to the Retrieval of Top-K Spatial Textual Clusters\titlenote{This research was  supported in part by project 61502310 from National Natural Science Foundation of China and a grant from the Obel Family Foundation.}}
%\subtitle{[Extended Abstract]
%\titlenote{A full version of this paper is available as
%\textit{Author's Guide to Preparing ACM SIG Proceedings Using
%\LaTeX$2_\epsilon$\ and BibTeX} at
%\texttt{www.acm.org/eaddress.htm}}}
%
% You need the command \numberofauthors to handle the 'placement
% and alignment' of the authors beneath the title.
%
% For aesthetic reasons, we recommend 'three authors at a time'
% i.e. three 'name/affiliation blocks' be placed beneath the title.
%
% NOTE: You are NOT restricted in how many 'rows' of
% "name/affiliations" may appear. We just ask that you restrict
% the number of 'columns' to three.
%
% Because of the available 'opening page real-estate'
% we ask you to refrain from putting more than six authors
% (two rows with three columns) beneath the article title.
% More than six makes the first-page appear very cluttered indeed.
%
% Use the \alignauthor commands to handle the names
% and affiliations for an 'aesthetic maximum' of six authors.
% Add names, affiliations, addresses for
% the seventh etc. author(s) as the argument for the
% \additionalauthors command.
% These 'additional authors' will be output/set for you
% without further effort on your part as the last section in
% the body of your article BEFORE References or any Appendices.
\numberofauthors{2} 

\author{
	% You can go ahead and credit any number of authors here,
	% e.g. one 'row of three' or two rows (consisting of one row of three
	% and a second row of one, two or three).
	%
	% The command \alignauthor (no curly braces needed) should
	% precede each author name, affiliation/snail-mail address and
	% e-mail address. Additionally, tag each line of
	% affiliation/address with \affaddr, and tag the
	% e-mail address with \email.
	%
	% 1st. author
	\alignauthor
	Dingming Wu\\
	\affaddr{College of Computer Science \& Software Engineering, Shenzhen University}\\
	% \affaddr{Shenzhen University}\\
	\affaddr{Shenzhen, China}\\
	\email{dingming@szu.edu.cn}
	% 2nd. author
	\alignauthor
	Christian S. Jensen\\
	\affaddr{Department of Computer Science}\\
	\affaddr{Aalborg University}\\
	\affaddr{Aalborg, Denmark}\\
	\email{csj@cs.aau.dk}
}

\maketitle
\begin{abstract}
Keyword-based web queries with local intent retrieve web content that
is relevant to supplied keywords and that represent points of
interest that are near the query location. Two broad categories of
such queries exist. The first encompasses queries that retrieve single
spatial web objects that each satisfy the query arguments. Most
proposals belong to this category. The second category, to which this
paper's proposal belongs, encompasses queries that support exploratory
user behavior and retrieve sets of objects that represent regions of
space that may be of interest to the user.
Specifically, the paper proposes a new type of query, namely the top-$k$
spatial textual clusters ($k$-STC) query that returns
the top-$k$ clusters that (i) are located the closest to a given
query location, (ii) contain the most relevant objects with regard to
given query keywords, and (iii) have an object density that exceeds a
given threshold.
To compute this query, we propose a basic algorithm that relies on
on-line density-based clustering and exploits an early stop
condition. To improve the response time, we design an advanced approach that includes three techniques: (i) an object skipping rule, (ii) spatially gridded posting lists, and (iii) a fast range query algorithm.
An empirical study on real data demonstrates that the paper's
proposals offer scalability and are capable of excellent performance.
\end{abstract}

%
% The code below should be generated by the tool at
% http://dl.acm.org/ccs.cfm
% Please copy and paste the code instead of the example below. 
%
%\begin{CCSXML}
%<ccs2012>
% <concept>
%  <concept_id>10010520.10010553.10010562</concept_id>
%  <concept_desc>Computer systems organization~Embedded systems</concept_desc>
%  <concept_significance>500</concept_significance>
% </concept>
% <concept>
%  <concept_id>10010520.10010575.10010755</concept_id>
%  <concept_desc>Computer systems organization~Redundancy</concept_desc>
%  <concept_significance>300</concept_significance>
% </concept>
% <concept>
%  <concept_id>10010520.10010553.10010554</concept_id>
%  <concept_desc>Computer systems organization~Robotics</concept_desc>
%  <concept_significance>100</concept_significance>
% </concept>
% <concept>
%  <concept_id>10003033.10003083.10003095</concept_id>
%  <concept_desc>Networks~Network reliability</concept_desc>
%  <concept_significance>100</concept_significance>
% </concept>
%</ccs2012>  
%\end{CCSXML}
%
%\ccsdesc[500]{Computer systems organization~Embedded systems}
%\ccsdesc[300]{Computer systems organization~Redundancy}
%\ccsdesc{Computer systems organization~Robotics}
%\ccsdesc[100]{Networks~Network reliability}
%
%
%%
%% End generated code
%%
%
%%
%%  Use this command to print the description
%%
%\printccsdesc

% We no longer use \terms command
%\terms{Theory}

%\keywords{ACM proceedings; \LaTeX; text tagging}

\section{Introduction}
Spatial keyword query processing~\cite{DBLP:conf/sigmod/ChenSM06,DBLP:journals/pvldb/CongJW09,DBLP:conf/icde/FelipeHR08,DBLP:conf/ssdbm/HariharanHLM07,DBLP:conf/dexa/KhodaeiSL10,DBLP:journals/tkde/LiLZLLW11,DBLP:conf/cikm/ZhouXWGM05,DBLP:conf/ssd/RochaGJN11,DBLP:journals/tods/WuYJ13,DBLP:journals/vldb/WuCJ12,WuTKDE11,DBLP:conf/sigmod/LuLC11} allows users to submit queries that not only specify ``what'' the user is searching for, e.g., in the form of keywords or  names of consumer products of interest, but also specify ``where'' information, e.g.,  street addresses, postal codes, or geographic coordinates like latitude and longitude. For instance, a search may request a ``good micro-brewery that serves pizza'' and that is close to the user's hotel. In general, a spatial keyword query retrieves a set of spatial web objects that are located close to the query location and whose text descriptions are relevant to the query keywords. Figure~\ref{fig:topkpoint} illustrates an example spatial keyword query $q$ (located at the dot) with keywords `outdoor seating' that requests the top-5 restaurants (denoted by squares) in London from TripAdvisor\footnote{http://www.tripadvisor.com} based on a scoring function that performs a weighted sum on the spatial distance and text relevance of restaurants. 

Several variants of the spatial keyword query have been studied recently. They differ in terms of the query arguments and in how the relevance of the objects to the query arguments is determined (scoring functions). 
A continuously moving top-$k$ spatial keyword query~\cite{DBLP:conf/icde/WuYJC11,DBLP:journals/tods/WuYJ13} requests an up-to-date result while the query location changes continuously. 
A location-aware top-$k$ prestige-based text retrieval query~\cite{DBLP:journals/pvldb/CaoCJ10} increases the ranking of objects that have highly ranked nearby objects.
A collective spatial keyword query~\cite{DBLP:conf/sigmod/CaoCJO11} retrieves a set of objects that together match query keywords.
Rocha-Junior and N{\o}rv{\aa}g~\cite{RochaGJN12} consider spatial keyword search in road networks, and Li et al.~\cite{Li12} study the spatial keyword search constrained by the moving direction.

\begin{figure}[h]
	%\vspace{-2ex}
	\centering
	\subfigure[Top-5 Objects]{\label{fig:topkpoint}\includegraphics[width=0.45\columnwidth]{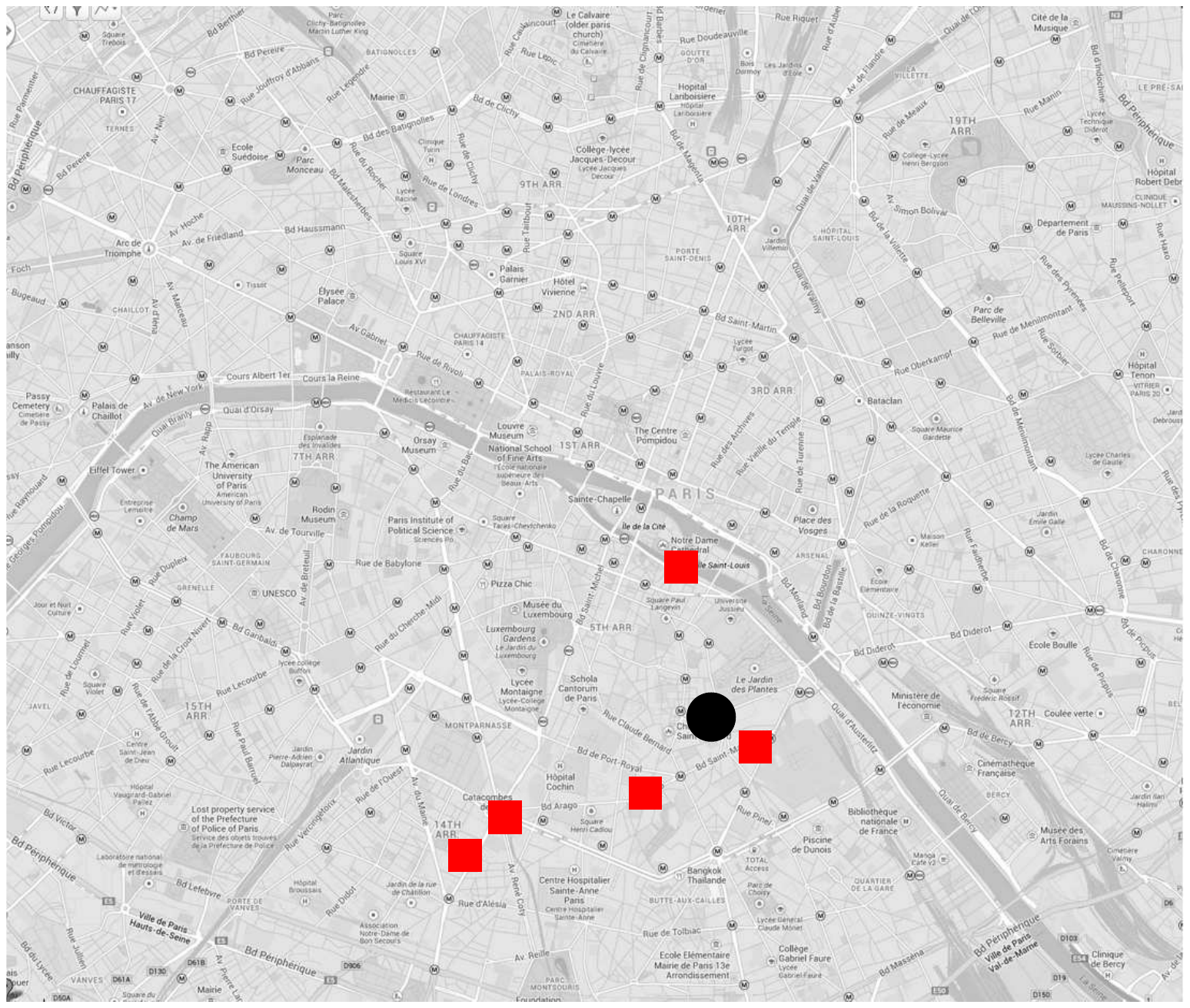}}\hspace{4ex}
	\subfigure[Top-5 Clusters]{\label{fig:topkregion}\includegraphics[width=0.45\columnwidth]{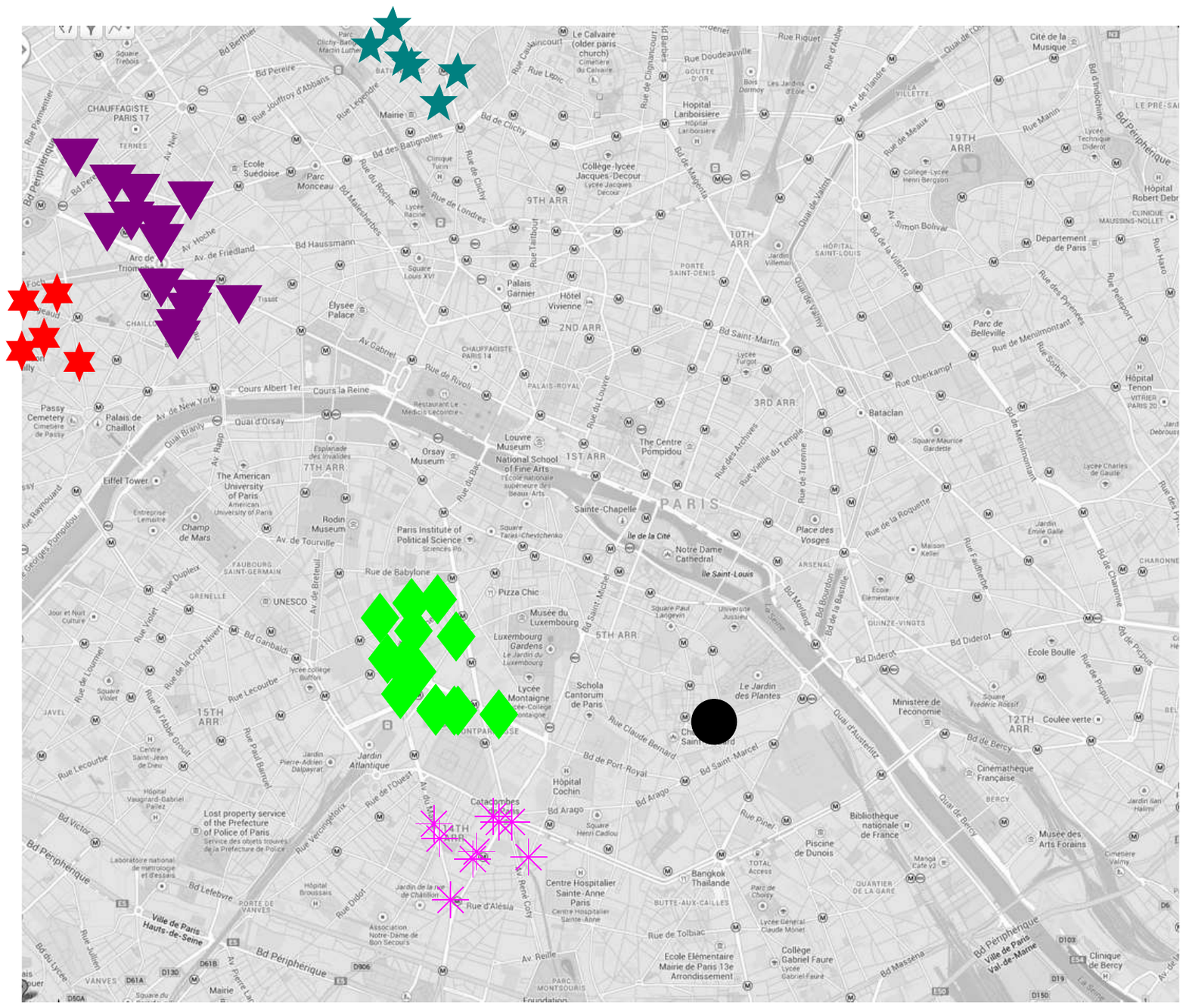}}
	\vspace{-2ex}
	\caption{Top-$k$ Objects vs. Top-$k$ Clusters}
	%	\vspace{-4ex}
\end{figure}

The majority of existing proposals retrieve single objects as elements in the result. However, users may be more interested in regions, or areas, that satisfy their query needs rather than in a set of objects that are  scattered in space. For instance, a user who is interested in purchasing jeans may prefer to visit a region with several shops that sell jeans, rather than visit shops that are further apart. In addition, similar objects (with the same functionality) are often located close to each other and thus form small regions, such as shopping, dinning, and entertainment districts~\cite{Durlauf}. Previous studies~\cite{DBLP:journals/pvldb/ChoiCT12,DBLP:journals/pvldb/TaoHCC13,DBLP:conf/cikm/LiuYS11} consider the co-location relationship between objects and retrieve regions so that the total weights of objects inside the result regions are maximized. However, these proposals impose specific shapes on the result regions, either fixed-size rectangles or circles. A recent study~\cite{DBLP:journals/pvldb/CaoCJY14} computes a maximum-sum region where the road network distance between objects is less than a query constraint and the sum of the scores of the objects inside the region is maximized. Nevertheless, this kind of query may retrieve a region containing many objects with low scores and may ignore a promising region with few objects with high scores. Also, the query includes a query range that reduces the search space. The performance is unknown if the query range is set to cover  the whole data set.

We aim at a solution that supports browsing, or exploratory, user behavior. It should have no constraints on the shapes of the retrieved regions. To this end, we view regions of interest as spatial textual clusters and propose and study a new type of query, namely the top-$k$ \textbf{Spatial Textual Cluster} ($k$-STC) query that returns the top-$k$ clusters such that (i) each cluster contains relevant spatial web objects with regard to query keywords, (ii) the density of each cluster satisfies a query constraint, and (iii) the clusters are ranked based on both their spatial distance and text relevance with regards to the query arguments. Figure~\ref{fig:topkregion} shows an example 5-STC query (black dot) with the same keywords (`outdoor seating') and location as the query in Figure~\ref{fig:topkpoint}. The top-5 spatial textual clusters (restaurants in London from TripAdvisor) are illustrated in the figure.

Clusters can be found using different approaches. We adopt density-based clustering that has several advantages: (i) no need to specify the number of clusters, (ii) clusters can have arbitrary shapes, and (iii) clusters are robust to outliers.
The two basic steps for top-$k$ STC retrieval are (i) obtaining the objects that are relevant to the query keywords and (ii) applying a density based clustering algorithm to these objects to find the top-$k$ clusters. The clustering algorithm processes the objects in a pre-defined order. It terminates when no  cluster with a better score can be found. We consider a cluster scoring function (introduced in Section~\ref{sec:prob-def}) that favors clusters close to the query location and that contain objects with high relevance with regard to the query keywords.
Different query keywords result in different clusters. Figure~\ref{fig:strex} demonstrates the top-5 clusters of restaurants obtained by two queries located at the same location (black dot), but with different keywords, namely `local cuisine' and `dessert'.
We are not aware of the query keywords until a query arrives, so pre-computing the clusters for all possible sets of query keywords is infeasible. We target an efficient algorithm that is able to find top-$k$ clusters with a response time that supports interactive search.

\begin{figure}[h]
%	\vspace{-2ex}
	\begin{center}
		\subfigure[`local cuisine']{\label{fig:topkregion1}\includegraphics[width=0.45\columnwidth]{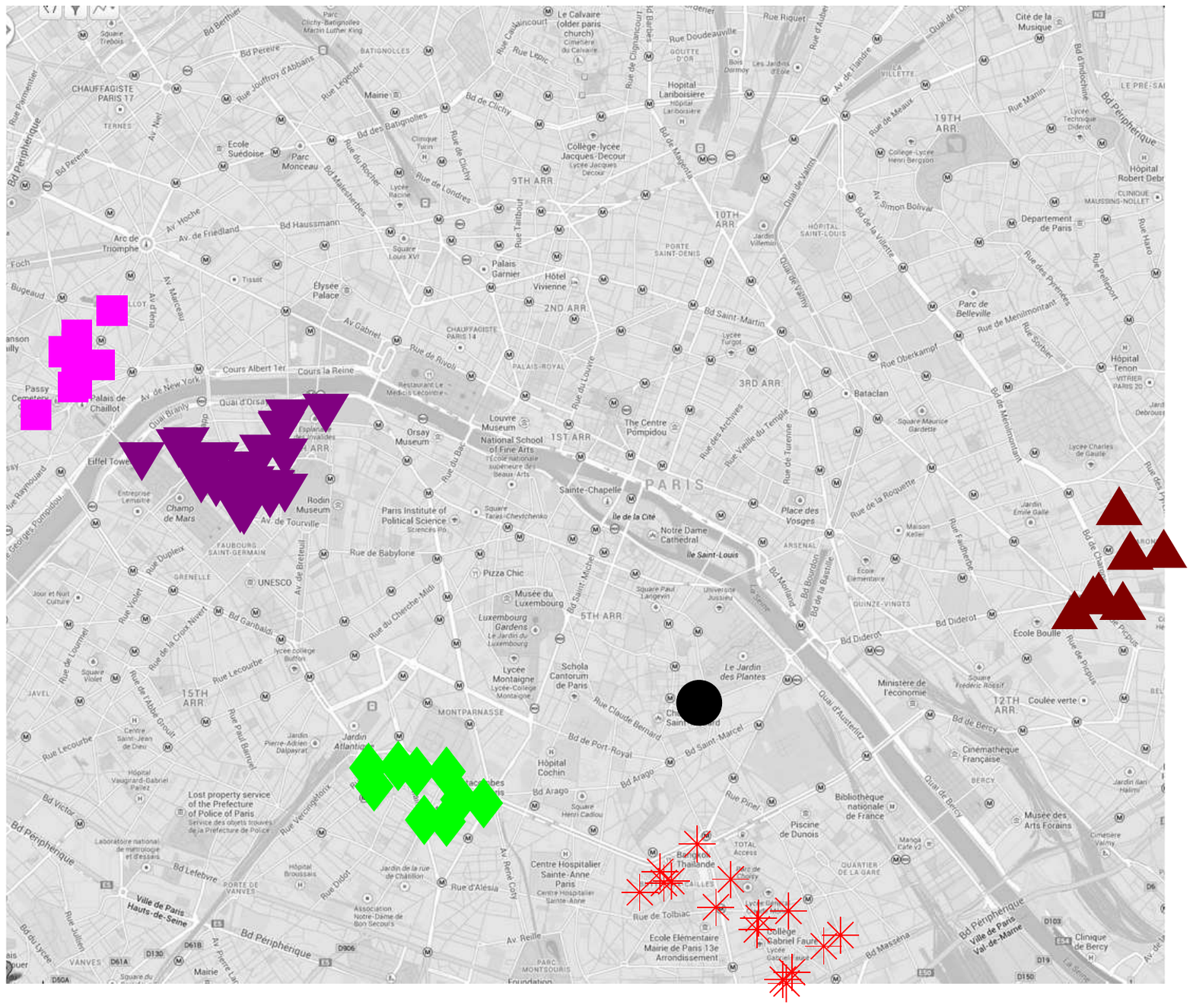}}\hspace{4ex}
		\subfigure[`dessert']{\label{fig:topkregion2}\includegraphics[width=0.45\columnwidth]{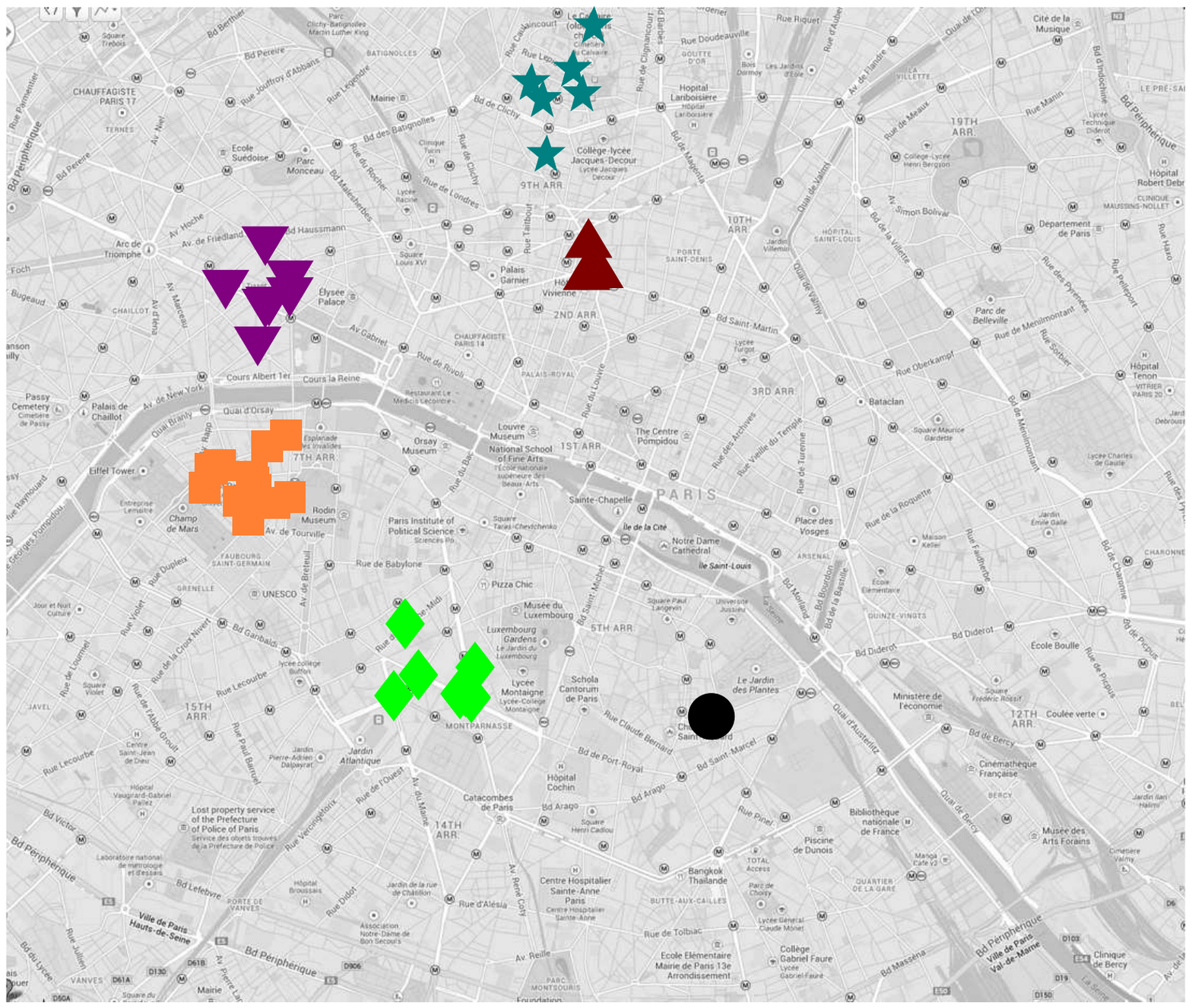}}
		\vspace{-2ex}
		\caption{Example Top-5 Clusters}
		\vspace{-2ex}
		\label{fig:strex}
	\end{center}
\end{figure}

More specifically, we propose a basic algorithm that combines on-line density-based clustering with an early stop condition. This algorithm applies the state-of-the-art density based clustering algorithm DBSCAN~\cite{DBLP:conf/kdd/EsterKSX96} to objects indexed by an  IR-tree~\cite{DBLP:journals/vldb/WuCJ12} to find top-$k$ clusters. This algorithm has to check the neighborhood of each relevant object in order to identify dense neighborhoods, since a cluster found by DBSCAN consists of  core objects and their dense neighborhoods. We propose an advance algorithm that reduces the number of objects to be examined. Moreover, determining whether a neighborhood is dense or not is time-consuming, since it involves range queries on the IR-tree. We design spatially gridded posting lists (SGPL) to estimate the selectivity of range queries on the IR-tree, so that sparse neighborhoods can be detected quickly without querying the index, thus saving computational cost. Further, SGPL is able to handle the  necessary range queries more efficiently compared with the basic algorithm.

%\textbf{multi-resolution approach} supported by a new index structure \textbf{Grid-tree} (G-tree). This approach produces approximate result that is close to the real result but with faster response time. 
%Sometimes, it may be difficult for users to specify density requirement in the query. 
%Moreover, the density of regions may vary across the space, e.g., city center vs. countryside. We develop an algorithm that is able to automatically select suitable parameters for different densities using the introduced \textbf{word $m$-dist plots}.

To the best of our knowledge, we are the first to study the top-$k$ spatial textual clusters query. This paper's contributions are: 
\begin{itemize}\setlength{\itemsep}{-1pt}
	\item We introduce the top-$k$ spatial textual clusters query that returns close, relevant, and dense clusters in order to support exploratory user behavior.
	\item A basic algorithm that exploits the density-based algorithm DBSCAN with an early stop condition on the IR-tree.
	\item In order to provide response times that enable interactive search, we propose an advanced approach that includes the following techniques.
	\begin{itemize}\setlength{\itemsep}{-1pt}
		\item In order to retrieve clusters, the neighborhood of each object has to be checked in the basic algorithm. We design a skipping rule that reduces the number of objects to be examined.
		\item A result cluster consists of dense neighborhoods. However, determining whether a neighborhood is dense in the basic algorithm involves an expansive range query on the index. We design spatially gridded posting lists (SGPL) to estimate the selectivity of range queries, thus enabling the pruning of sparse neighborhoods.
		\item The SGPL is also able to support range queries, so that a performance gain is obtained for the range queries that cannot be avoided, compared with the basic algorithm.
	\end{itemize}
	%\item A multi-resolution approach on a new index structure G-tree is introduced to provide approximate result with fast response time.
	%\item We present an algorithm using word $m$-dist plots to find regions with varied densities.
	\item An extensive empirical study with real data demonstrates that the paper's proposals offer scalability and are capable of excellent performance. 
\end{itemize}

The rest of the paper is organized as follows. Section~\ref{sec:prob-def} formally defines the top-$k$ spatial textual cluster query. The basic algorithm is presented in Section~\ref{sec:baseline}. We present the advanced approach in Section~\ref{sec:advance}, including three techniques: (i) a skipping rule that reduces the number of objects to be examined (Section~\ref{sec:reduce}), (ii) the spatially gridded posting lists (SGPL) for selectivity estimation of range queries (Section~\ref{sec:histogram}), and (iii) a fast range query processing algorithm on SGPL (Section~\ref{sec:fastrange}). 
%The multi-resolution approach and the G-tree are introduced in Section~\ref{sec:resolution}. 
%Section~\ref{sec:density} presents the algorithm finding regions with varied density. 
We report on the empirical performance study in Section~\ref{sec:exp}. Finally, we cover related work in Section~\ref{sec:related} and offer conclusions and research directions in Section~\ref{sec:con}.

\section{Problem Definition}\label{sec:prob-def}
We consider a data set $\mathcal{D}$ in which each object $p \in
\mathcal{D}$ is a pair $\langle \lambda,\psi \rangle$ of a point
location $p.\lambda$ and a text description, or document, $p.\psi$
(e.g., the facilities and menu of a restaurant). Document $p.\psi$
is represented by a vector $(w_1,w_2, \cdots , w_i)$ in which each
dimension corresponds to a distinct term $t_i$ in the document. The
weight $w_i$ of a term in the vector can be computed in several
different ways, e.g., using tf-idf weighting~\cite{Salton1975} or
language models~\cite{Ponte1998}.

We adopt the density-based clustering model~\cite{DBLP:conf/kdd/EsterKSX96}, and clusters are query-dependent. We proceed to present the relevant definitions extended for the top-$k$ spatial textual cluster query.

\begin{definition}
	Given a set of keywords $\psi$, the \textbf{relevant object set} $D_\psi$ satisfies (i) $D_\psi \subseteq \mathcal{D}$ and (ii) $\forall p \in D_\psi (\psi \cap p.\psi \neq \emptyset)$. 
\end{definition}

\begin{definition}
	The \textbf{$\epsilon$-neighborhood} of  a relevant object $p \in D_\psi$, denoted by $N_{\epsilon}(p)$, is defined as $N_{\epsilon}(p)=\{p_i \in D_\psi~|~\eucdist{p}{p_i} \leq \epsilon\}$. 
\end{definition}

\begin{definition}
	A \textbf{dense $\epsilon$-neighborhood} of a relevant object $N_{\epsilon}(p)$ contains at least $\mathit{minpts}$ objects, i.e., $|N_{\epsilon}(p)| \geq \mathit{minpts}$. 
\end{definition}

\begin{definition}
	A relevant object $p$ is a \textbf{core} if its $\epsilon$-neighborhood is dense.
\end{definition}

\begin{definition}
	A relevant object $p_i$ is \textbf{directly reachable} from a relevant object $p_j$ with regard to $\epsilon$ and $\mathit{minpts}$ if
	\begin{enumerate}\setlength{\itemsep}{-1pt}
		\item $p_i \in N_{\epsilon}(p_j)$ and
		\item $|N_{\epsilon}(p_j)| \geq \mathit{minpts}$.
	\end{enumerate}
\end{definition}

\begin{definition}
	A relevant object $p_i$ is \textbf{reachable} from a relevant object $p_j$ with regard to $\epsilon$ and $\mathit{minpts}$ if there is a chain of relevant objects $p_1,\cdots,p_n$, where $p_i=p_1$, $p_j=p_n$, such that $p_{m}$ is directly reachable from $p_{m+1}$ for $1 \leq m < n$. 
\end{definition}

\begin{definition}
	A relevant object $p_i$ is \textbf{connected} to a relevant object $p_j$ with regard to $\epsilon$ and $\mathit{minpts}$ if there is a relevant object $p_m$ such that both $p_i$ and $p_j$ are reachable from $p_m$ with regard to $\epsilon$ and $\mathit{minpts}$.
\end{definition}

\begin{definition}
	A \textbf{spatial textual cluster} $R$ with regard to $\psi, \epsilon$, and $\mathit{minpts}$  satisfies the following condition: 
	\begin{enumerate}\setlength{\itemsep}{-1pt}
		\item $R \subseteq D_\psi$ and
		\item $R$ is a maximal set such that $\forall p_i, p_j \in R$, $p_i$ and $p_j$ are connected through dense $\epsilon$-neighborhoods when considering only objects in $D_\psi$.
		%\item $\forall p_i,p_j$ if $p_i \in R$ and $p_j$ is relevant-density-reachable from $p_i$ with regard to $\epsilon$ and $\mathit{minpts}$, then $p_j \in R$ (Maximality).
		%\item $\forall p_i,p_j \in R$ $p_i$ is relevant-density-connected to $p_j$ with regard to $\epsilon$ and $\mathit{minpts}$ (Connectivity).
	\end{enumerate}
\end{definition}

A spatial textual cluster is a density-based cluster~\cite{DBLP:conf/kdd/EsterKSX96} found from the relevant object set $D_\psi$ with regard to the query keywords $\psi$. 
%Let $\mathcal{D}\shortmid_{\Lambda}$ be a subset of $\mathcal{D}$
%where each object is inside the spatial extent $\Lambda$, i.e.,
%$\mathcal{D}\shortmid_{\Lambda}= \{ p | p \in \mathcal{D} \wedge
%p.\lambda \in \Lambda \}$.
%Some spatial textual regions may overlap. We define the overlapping
%ratio of two spatial textual regions as:
%
%\begin{definition}
%The overlapping ratio of two spatial textual regions $R_a$ and $R_b$
%is
%$$\mathit{Overlp_r}(R_a,R_b)=\frac{|\mathcal{D}\shortmid_{R_a.\Lambda \cap R_b.\Lambda}|}{\min (|\mathcal{D}\shortmid_{R_a.\Lambda}|,|\mathcal{D}\shortmid_{R_b.\Lambda}|)}.$$
%\end{definition}
%
%\begin{example}
%In Figure~\ref{fig:overlap1}, the overlapping ratio of $R_1$ and
%$R_2$ is $\mathit{Overlp_r}(R_1,R_2)=3/\min(4,6)=0.75$. The
%overlapping ratio of $R_1$ and $R_2$ in Figure~\ref{fig:overlap2} is
%$\mathit{Overlp_r}(R_1,R_2)=3/\min(3,8)=1$.
%\end{example}
%
%\begin{figure}[h]
%  \centering
%    \subfigure[]{\label{fig:overlap1}\includegraphics[width=0.4\columnwidth]{image/overlap1}}
%    \subfigure[]{\label{fig:overlap2}\includegraphics[width=0.35\columnwidth]{image/overlap2}}
%  \caption{Example Overlapping Ratio}
%\end{figure}
A \textbf{top-$k$ Spatial Textual Cluster} ($k$-STC) query
$q=\langle \lambda, \psi, k, \epsilon,$ $\mathit{minpts} \rangle$ takes five 
arguments: a point location $\lambda $, a set of keywords $\psi$, a
number of requested object sets $k$, a distance constraint $\epsilon$ on neighborhoods, and the minimum number of objects $\mathit{minpts}$ in a dense $\epsilon$-neighborhood.
%a threshold $\tau$
%that indicates the overlapping ratio of any pair of spatial textual
%regions in the result is less than $\tau$, and the minimum number of
%objects $\gamma$ $(\geq 2)$ in a spatial textual region. 
It returns
a list of $k$ spatial textual clusters that minimize a scoring
function and that are in ascending order
of their scores. The maximality of each cluster implies that the top-$k$ clusters do not overlap.
The density requirement parameters $\epsilon$ and $\mathit{minpts}$ are able to capture how far the user is willing to move before reaching another place of interest. They depend on the users' preferences.

Intuitively, a cluster with high text relevance and that is located close to the query location should be given a high ranking in the result. We thus use the following scoring function. 
%Note that our proposals are orthogonal to ranking functions.
\begin{equation}\label{equ:rankf}
	\mathit{score}_q(R) = \alpha \cdot d_{q.\lambda}(R) + (1-\alpha) \cdot (1-\mathit{tr}_{q.\psi}(R)),
\end{equation}
where $d_{q.\lambda}(R)$ is the minimum spatial distance between the query location and the objects in $R$ and $\mathit{tr}_{q.\psi}(R)$ is the maximum text relevance in $R$. The approaches we present are applicable to scoring functions that are monotone with respect to both spatial distance and text relevance. Parameter $\alpha$ is used to balance the spatial proximity and the text relevance of the retrieved clusters. Note that all spatial distances and text relevances are normalized to the range $[0,1]$.

\begin{example}\label{ex:pex}
	Consider the example $k$-STC query $q$ with location $q.\lambda$ and $q.\epsilon$ as shown in Figure~\ref{fig:pex}(a) and with $q.\psi=\{\mathit{coffee, tea}\}$, $q.k=1$, and $q.\mathit{minpts}=2$. The data set contains the 7 objects $p_1,p_2, \cdots$, $p_7$ shown in Figure~\ref{fig:pex}(a). Figure~\ref{fig:pex}(b) shows the document vector and the Euclidean distance to the query location of each object. Let $\alpha=0.5$ and $\mathit{tr}_{q.\psi}(p.\psi)=\sum_{t \in q.\psi \cap p.\psi}w_t$. The top-1 cluster is $R=\{p_3,p_5\}$ that has score 0.315 ($=0.5 \times (0.11+0.15)/2 + (1-0.5) \times (1-(0.5+0.5)/2)$).
\end{example}

\begin{figure}[h]
	\vspace{-4ex}
	\begin{center}
		\small
		\begin{tabular}{@{}c@{}@{}c@{}}
			\begin{tabular}{c}
				\includegraphics[width=0.3\columnwidth]{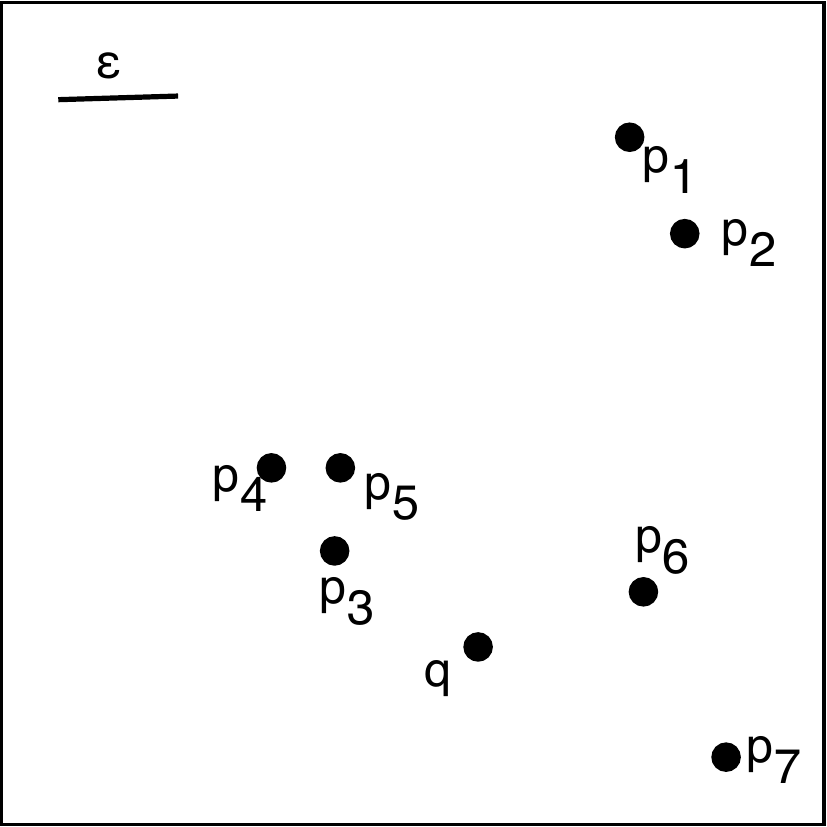}
			\end{tabular}
			&  
			\begin{tabular}{|cccc||c|}
				\hline
				& coffee & tea & pizza & $d_{q.\lambda}(p_i.\lambda)$ \\
				\hline
				$p_1$ &(0.2, & 0.2,&0)&0.25\\
				\hline
				$p_2$ &(0.2, & 0.2,&0)&0.2\\
				\hline
				$p_3$ & (0.5,&0,&0)&0.11\\
				\hline
				$p_4$ & (0,&0,&0.5)&0.18\\
				\hline
				$p_5$ & (0,&0.5,&0)&0.15\\
				\hline
				$p_6$ & (0.5,&0,&0)&0.1\\
				\hline
				$p_7$ & (0.5,&0.5,&0)&0.19\\
				\hline
			\end{tabular} \\
			(a)&(b)
		\end{tabular}
		%\vspace{-2ex}
		\caption{Example $k$-STC Query}
		\label{fig:pex}
		\vspace{-4ex}
	\end{center}
\end{figure}

\section{Basic Approach}\label{sec:baseline}
We first briefly introduce the
index structures used for organizing objects and then present the basic algorithm for the processing of $k$-STC queries.
\subsection{Indexes}

We adopt the IR-tree~\cite{DBLP:journals/vldb/WuCJ12} and inverted file~\cite{Zobel06} index structures to organize objects.

An inverted file index has two main components.
\begin{itemize}\setlength{\itemsep}{-1pt}
	\item A vocabulary of all distinct words appearing in the text
	descriptions of the objects in the data set.
	\item A posting list for each word $t$, i.e., a sequence of pairs $(\mathit{id},w)$, where $\mathit{id}$ is the
	identifier of an object whose text description contains $t$ and $w$ is the word's weight in the object.
\end{itemize}

The IR-tree is an R-tree~\cite{Guttman84} extended with inverted files. Each leaf node contains entries of the form $e_0=(\mathit{id},\Lambda)$, where $e_0.\mathit{id}$ refers to an object identifier and $e_0.\Lambda$ is a minimum bounding rectangle (MBR) of the spatial location of the object. Each leaf node also contains a pointer to an inverted file indexing the text descriptions of all objects stored in the node. Each non-leaf node $N$  contains entries of the form $e=(\mathit{id},\Lambda)$, where $e.\mathit{id}$ points to a child node of $N$ and $e.\Lambda$ is the MBR of all rectangles in entries of the child node. Each non-leaf node also contains a pointer to an inverted file indexing the pseudo text descriptions of the entries stored in the node. A pseudo text description of an entry $e$ is a summary of all (pseudo) text descriptions in the entries of the child node pointed to by $e$. This enables the derivation of an upper bound on the text relevance to a query of any object contained in the subtree rooted at $e$.

\begin{example}
	Table~\ref{tab:invertedfile} shows the inverted file indexing the 7 objects in Figure~\ref{fig:pex}. For example, the posting list for word `pizza' tells that the text description of $p_4$ contains `pizza' and the weight is 0.5. Figure~\ref{fig:irtree} illustrates the IR-tree with fanout 2 indexing the 7 objects in Figure~\ref{fig:pex}. As a specific example, the weight of
	`coffee' for entry $N_6$ in inverted file $\mathit{IF}_7$ is 0.5, which is the maximal
	weight of `coffee' in the two documents in the child node of $N_6$.
\end{example}
\vspace{-4ex}
\begin{table}[h]
	%\vspace{-4ex}
	\caption{Example Inverted File}\label{tab:invertedfile}
	%\vspace{-4ex}
	\begin{center}
		\begin{tabular}{|l|l|}
			\hline
			coffee & $(p_3,0.5),(p_6,0.5),(p_7,0.5),(p_1,0.2),(p_2,0.2)$ \\
			\hline
			tea & $(p_5,0.5),(p_7,0.5),(p_1,0.2),(p_2,0.2)$\\
			\hline
			pizza & $(p_4,0.5)$\\
			\hline
		\end{tabular}
	\end{center}
	%\vspace{-4ex}
\end{table}%

\begin{figure}[h]
	\vspace{-4ex}
	\begin{center}
		\includegraphics[width=\columnwidth]{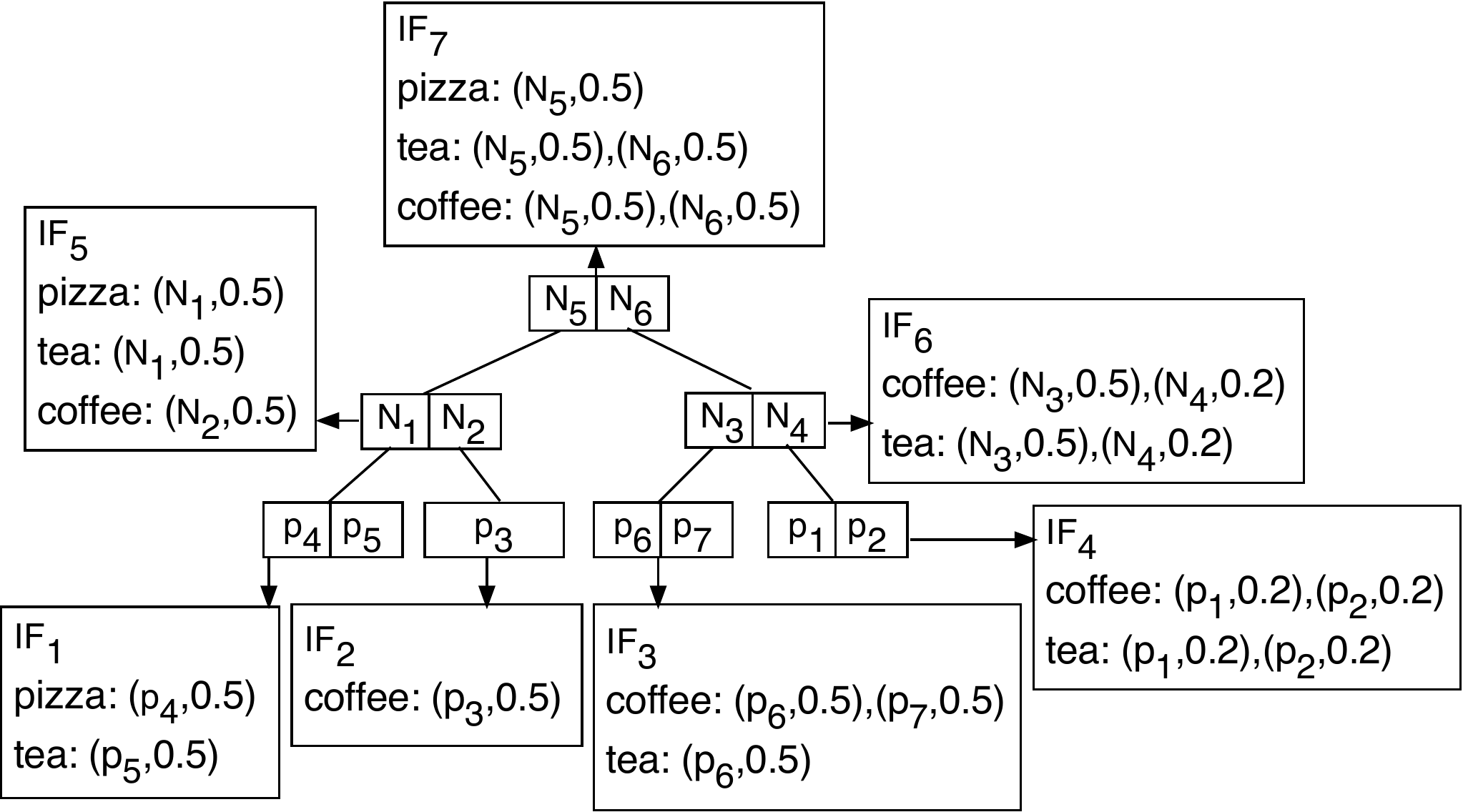}
		%\vspace{-4ex}    
		\caption{Example IR-tree}
		\label{fig:irtree}
		\vspace{-4ex}
	\end{center}
\end{figure}

\subsection{Algorithm}
Given a query, the top-$k$ spatial textual clusters are the top-$k$ density-based clusters found from the relevant object set $D_\psi$ with regard to the query keywords. A straightforward solution is to first obtain the relevant object set $D_\psi$ and then to find all density-based clusters in $D_\psi$. These clusters are then sorted according to the scoring function (Equation~\ref{equ:rankf}), and the top-$k$ best clusters are returned as the result. This straightforward solution is inefficient, since finding all clusters is expensive. The proposed basic algorithm is able to return the top-$k$ clusters without first finding all clusters. Specifically, some candidate clusters are first obtained. A threshold is set according to the  score of the $k$-th candidate cluster. The basic algorithm estimates a bound on the scores of all unfound clusters. If the bound is worse than the threshold, the currently found top-$k$ clusters are the result. 

%The proposed basic algorithm is an on-line density-based clustering method, consisting of two phrases. Phase I $\mathit{Filtering}$ obtains the relevant objects with regard to the query keywords via the inverted index. Phase II $\mathit{OnLineClustering}$  finds the top $k$ object sets (clusters) out of the relevant objects found by Phase I. 

Algorithm~\ref{alg:basic} shows the pseudo code of the basic algorithm. 
It first obtains the relevant object set $D_\psi$ with regard to the query keywords by taking the union of the posting lists of the query terms in the inverted index (line 1). 
Next, it sorts the objects in $D_\psi$ in ascending order of their Euclidean distances to the query location, i.e., $d_{q.\lambda}(p.\lambda)$, and keeps the sorted copy in
$\mathit{slist}$ (line 2).
In addition, the objects in $D_\psi$ are sorted  in ascending order of their converted text relevance with regard to the query keywords, i.e., $1-\mathit{tr}_{q.\psi}(p.\psi)$, and the sorted copy is kept in
$\mathit{tlist}$ (line 3).

 The candidate list $\mathit{rlist}$ is initialized as empty (line 4). The threshold is set to infinity (line 5).
The algorithm does sorted access in parallel to each of the two sorted lists $\mathit{slist}$ and $\mathit{tlist}$ (line 7). Specifically, the algorithm accesses the top member of each of the lists under sorted access, then accesses the second member of each of the lists, and so on. As an object $p$ is obtained from one of the lists, function $\mathbf{GetCluster}$ (Algorithm~\ref{alg:expand}) tries to retrieve the cluster containing $p$ as a core object (line 8). Meanwhile, the objects contained in the cluster are removed from both $\mathit{slist}$ and $\mathit{tlist}$. If the retrieved cluster is not empty, it is added to the candidate list $\mathit{rlist}$, and the threshold $\tau$ is updated to the score of the $k$-th candidate in $\mathit{rlist}$ (lines 9--11). The algorithm estimates a lower bound on the  scores of all unfound clusters by using the minimal Euclidean distance $\mathit{sb}$ and minimal converted text relevance $\mathit{tb}$ of all objects left in $\mathit{slist}$ and $\mathit{tlist}$, i.e., $\mathit{bound}=\alpha \cdot \mathit{sb} + (1-\alpha) \cdot \mathit{tb}$ (lines 12--14). The result is guaranteed to be found when $\mathit{bound} \geq \tau$ or $\mathit{slist}$ is exhausted (indicating that $\mathit{tlist}$ is also exhausted) (line 15). The top-$k$ candidate clusters in $\mathit{rlist}$ are finalized as the result.

\begin{lemma}\label{lem:earlystop}
	The stop condition (line 15) in Algorithm~\ref{alg:basic} guarantees the correct result.
\end{lemma}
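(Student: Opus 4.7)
The plan is to show that when the stop condition at line 15 fires, every spatial textual cluster that has not yet been placed in $\mathit{rlist}$ has score at least $\tau$; since $\tau$ is the score of the current $k$-th candidate in $\mathit{rlist}$, such an undiscovered cluster cannot improve on the top-$k$ already in hand. The two branches of the stop condition will be treated separately: $\mathit{bound}\geq\tau$ gives a bound-based argument, while ``$\mathit{slist}$ exhausted'' trivially implies no relevant objects remain.

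The main step, and the one I expect to be the main obstacle, is establishing the invariant that at line 15 every object of every undiscovered cluster is still present in both $\mathit{slist}$ and $\mathit{tlist}$. This rests on two facts. First, by the maximality condition in the definition of a spatial textual cluster, together with the non-overlap property of the top-$k$ clusters stated after Equation~\ref{equ:rankf}, an object of an undiscovered cluster cannot already have been returned as part of a discovered one. Second, Algorithm~\ref{alg:basic} removes an object from the two lists only when $\mathbf{GetCluster}$ actually returns a non-empty cluster containing it. The delicate case is a probed object that turns out not to be a core: $\mathbf{GetCluster}$ returns empty for it, so it is simply left in place in both lists and will be removed only later, when a core of its containing cluster is probed and the cluster is at last discovered. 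Once this bookkeeping is verified, the invariant is immediate.

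With the invariant in hand, the bound follows by coordinate-wise comparison. For any undiscovered cluster $R$ and any $p\in R$, the invariant gives $p\in\mathit{slist}$ and $p\in\mathit{tlist}$, whence $d_{q.\lambda}(p.\lambda)\geq \mathit{sb}$ and $1-\mathit{tr}_{q.\psi}(p.\psi)\geq \mathit{tb}$ by the definitions of $\mathit{sb}$ and $\mathit{tb}$. Taking the minimum over $p\in R$ yields $d_{q.\lambda}(R)\geq\mathit{sb}$ and $1-\mathit{tr}_{q.\psi}(R)\geq\mathit{tb}$, so
\[
\mathit{score}_q(R)\;=\;\alpha\cdot d_{q.\lambda}(R)+(1-\alpha)\cdot\bigl(1-\mathit{tr}_{q.\psi}(R)\bigr)\;\geq\;\alpha\cdot\mathit{sb}+(1-\alpha)\cdot\mathit{tb}\;=\;\mathit{bound}.
\]
When $\mathit{bound}\geq\tau$, this gives $\mathit{score}_q(R)\geq\tau$, so $R$ cannot displace any of the current top-$k$ candidates. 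When $\mathit{slist}$ is exhausted, no relevant objects remain and therefore no undiscovered cluster exists at all. In either case the top-$k$ entries of $\mathit{rlist}$ form the correct answer.
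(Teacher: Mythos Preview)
Your overall strategy matches the paper's: split on the two disjuncts of the stop condition, and for the $\mathit{bound}\geq\tau$ branch argue that every undiscovered cluster $R$ satisfies $\mathit{score}_q(R)\geq\mathit{bound}$. The paper's own proof is a one-liner that simply asserts the chain $\mathit{score}_q(R)\geq\alpha\cdot\mathit{sb}+(1-\alpha)\cdot\mathit{tb}$ without justifying why $d_{q.\lambda}(R)\geq\mathit{sb}$ and $1-\mathit{tr}_{q.\psi}(R)\geq\mathit{tb}$; you try to supply that missing justification via an invariant, which is the right instinct.

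However, the invariant rests on a misreading of $\mathbf{GetCluster}$. You write that a probed object that turns out not to be a core ``is simply left in place in both lists,'' but line~4 of Algorithm~\ref{alg:expand} explicitly removes $p$ from both $\mathit{tlist}$ and $\mathit{slist}$ before marking it as noise and returning $\emptyset$. Such an object can perfectly well be a \emph{border} object of an as-yet-undiscovered cluster $R$ --- this is precisely why lines~16--17 later re-absorb noise points into a cluster. Hence your invariant ``every object of every undiscovered cluster is still present in both lists'' is false as stated. Once a border point of $R$ has been stripped from $\mathit{slist}$, the head value $\mathit{sb}$ can exceed $d_{q.\lambda}(R)$ (and analogously for $\mathit{tb}$), and the coordinate-wise comparison you rely on no longer goes through.

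The paper's proof does not engage with this point at all; it simply asserts the inequality. So in a narrow sense your argument is more careful than the original, but the added care exposes a gap rather than closing one. If you want to salvage the invariant you would need to restrict it to \emph{core} objects of $R$ (those are indeed never removed until $R$ is discovered) and then argue separately that the minimum in $d_{q.\lambda}(R)$ and the maximum in $\mathit{tr}_{q.\psi}(R)$ are attained at cores --- which, under Equation~\ref{equ:rankf}, is not obviously true.
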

\begin{proof}
	The stop condition is satisfied when either of the following two conditions is true: (i) $\mathit{slist} = \emptyset$ or (ii) $\mathit{bound} \geq \tau$. Condition (i) means that all objects have been clustered or identified as noise. Hence, the top-$k$ candidate clusters are the final result, since no further clusters can be found. Consider condition (ii). For any unfound cluster $R$, 
	\begin{eqnarray}
		\mathit{score}_q(R) &=& \alpha \cdot d_{q.\lambda}(R) + (1-\alpha) \cdot (1-\mathit{tr}_{q.\psi}(R))\nonumber \\
		&\geq& \alpha \cdot sb + (1-\alpha) \cdot tb\nonumber \\
		&=& \mathit{bound} \geq \tau. \nonumber 
	\end{eqnarray}
	Hence, no cluster can have a better (smaller) score than does the top-$k$ candidates. 
\end{proof}

%\vspace{-4ex}
\begin{algorithm}[h]
	\caption{$\mathbf{Basic}$(Query $q$, IR-tree $\mathit{irtree}$, InvertedIndex $\mathit{iindex}$, Integer $k$)}\label{alg:basic}
	\begin{algorithmic}[1]
		\State $D_\psi \leftarrow \mathbf{LoadRelevantObjects}(q.\psi,iindex)$;
		\State $\mathit{slist \leftarrow}$ sort objects in $D_\psi$ in ascending order of $d_{q.\lambda}(p.\lambda)$;
		\State $\mathit{tlist \leftarrow}$ sort objects in $D_\psi$ in ascending order of $1-\mathit{tr}_{q.\psi}(p.\psi)$;
		\State $\mathit{rlist} \leftarrow \emptyset$;
		\State $\tau \leftarrow \infty$;
		\Repeat
		\State Object $p \leftarrow$ sorted access in parallel to $\mathit{slist}$ and $\mathit{tlist}$;
		\State $c \leftarrow \mathbf{GetCluster}(p,q,\mathit{irtree},\mathit{tlist}, \mathit{slist})$;
		\If{$c \neq \emptyset$}
		\State Add $c$ to $\mathit{rlist}$;
		\State $\tau \leftarrow$ score of the $k$-th cluster in $\mathit{rlist}$;
		\EndIf
		\State $\mathit{sb} \leftarrow \mathbf{First}(\mathit{slist})$;
		\State $\mathit{tb} \leftarrow \mathbf{First}(\mathit{tlist})$;
		\State $\mathit{bound} \leftarrow \alpha \cdot \mathit{sb} + (1-\alpha) \cdot \mathit{tb}$;
		\Until{$\mathit{bound} \geq \tau \vee \mathit{slist} = \emptyset$}
		\State Return $\mathit{rlist}$;
	\end{algorithmic}
\end{algorithm}
%\vspace{-4ex}

To retrieve a cluster $R$ containing $p$ as a core object, function $\mathbf{GetCluster}$ (see Algorithm~\ref{alg:expand}) issues a range query centered at $p$ with radius $q.\epsilon$ on the IR-tree (line 2). The goal is to check whether the $\epsilon$-neighborhood of $p$ is dense. If the result set $\mathit{neighbors}$ of the range query contains fewer than $q.\mathit{minpts}$ objects (a sparse neighborhood), object $p$ is marked as noise, and an empty set is returned (lines 3--6). Otherwise, $\mathit{neighbors}$ is considered as a temporary cluster (line 8). Next, the temporary cluster is expanded by checking the $\epsilon$-neighborhood of each object $p_i$ in $\mathit{neighbors}$ except $p$ (lines 12 and 13). If the $\epsilon$-neighborhood of $p_i$ is dense (line 14), the objects inside and previously labeled as noise are added to the temporary cluster (lines 16 and 17). Next, the objects inside and not belonging to the temporary cluster are added to both the temporary cluster and to $\mathit{neighbors}$ in preparation for further expansion in subsequent iterations (lines 18--21). In order to avoid duplicate operations, the objects that are marked as noise or are added to the temporary cluster are removed from lists $\mathit{tlist}$ and $\mathit{slist}$ (lines 4, 9, and 20). The temporary cluster $R$ is finalized and returned if no more object can be added.

Function $\mathbf{RangeQuery}$ (Algorithm~\ref{alg:range}) is used to find  the objects in the $\epsilon$-neighborhood of an object $p$ using an IR-tree on the objects. A priority queue organizes the nodes to be visited in the IR-tree, using the minimum Euclidean distance between nodes and the query as the key. The queue is initialized as empty (line 2). First, the root node of the IR-tree is added to the queue (line 3). The algorithm iteratively visits and removes the first node in the queue (lines 5 and 6). If the node is a leaf node, the objects in it that are relevant to the query keywords and that are located in range $q.\epsilon$ are added to the result $\mathit{neighbors}$ (lines 7--10). Otherwise, the node is a non-leaf node. Its child nodes that are relevant to the query keywords and that are located in range $q.\epsilon$ are inserted into the priority queue (lines 12--14). The process terminates when the queue is exhausted.

%\vspace{-4ex}
\begin{algorithm}[h]
	\caption{$\mathbf{GetCluster}$(Object $p$, Query $q$, IR-tree $\mathit{irtree}$, List $\mathit{tlist}$, List $\mathit{slist}$)}\label{alg:expand}
	\begin{algorithmic}[1]
		\State $R \leftarrow \emptyset$;
		\State $\mathit{neighbors} \leftarrow \mathit{irtree}.\mathbf{RangeQuery}(q,p)$;
		\If{$\mathit{neighbors.size} < q.\mathit{minpts}$} 
		\State Remove $p$ from $\mathit{tlist}$ and $\mathit{slist}$;
		\State Mark $p$ as a noise;
		\State Return $R$;
		\Else  \Comment{$p$ is a core;}
		\State Add $\mathit{neighbors}$ to $R$;
		\State Remove $\mathit{neighbors}$ from $\mathit{tlist}$ and $\mathit{slist}$;
		\State Remove $p$ from $\mathit{neighbors}$;
		\While{$\mathit{neighbors}$ is not empty}
		\State Object $p_i \leftarrow$ remove an object from $\mathit{neighbors}$;
		\State $\mathit{neighbors}_i \leftarrow \mathit{irtree}.\mathbf{RangeQuery}(q,p_i)$;
		\If{$\mathit{neighbors}_i.\mathit{size} \geq q.\mathit{minpts}$}
		\For{each object $p_j$ in $\mathit{neighbors}_i$}
		\If{$p_j$ is a noise}
		\State Add $p_j$ to $R$;
		\ElsIf{$p_j \notin R$}
		\State Add $p_j$ to $R$;
		\State Remove $p_j$ from $\mathit{tlist}$ and $\mathit{slist}$;
		\State Add $p_j$ to $\mathit{neighbors}$;
		\EndIf
		\EndFor
		\EndIf
		\EndWhile
		\State Return $R$;
		\EndIf
	\end{algorithmic}
\end{algorithm}
%\vspace{-4ex}

\begin{algorithm}[h]
	\caption{$\mathbf{RangeQuery}$(Query $q$, Object $p$)}\label{alg:range}
	\begin{algorithmic}[1]
		\State $\mathit{neighbors \leftarrow \emptyset}$;
		\State PriorityQueue $\mathit{queue} \leftarrow \emptyset$;
		\State $\mathit{queue}.\mathbf{Enqueue}(\mathit{root})$;
		\While{$\mathit{queue}$ is not empty}
		\State $e \leftarrow \mathit{queue}.\mathbf{Dequeue}()$;
		\State $N \leftarrow \mathbf{ReadNode(e)}$;
		\If{$N$ is a leaf node}
		\For{each object $o$ in $N$}
		\If{$o$ is relevant to $q.\psi$ and $\eucdistmin{p}{o} \leq q.\epsilon$}
		\State $\mathit{neighbors}.\mathbf{Add}(o)$;
		\EndIf
		\EndFor
		
		\Else
		\For{each entry $e'$ in $N$}
		\If{$e'$ is relevant to $q.\psi$ and $\eucdistmin{p}{e'} \leq q.\epsilon$}
		\State $\mathit{queue}.\mathbf{Enqueue}(e')$;
		\EndIf
		\EndFor
		\EndIf	
		\EndWhile
		\State Return $\mathit{neighbors}$;
	\end{algorithmic}
\end{algorithm}

\begin{example}
	Consider the $k$-STC query in Example~\ref{ex:pex}. Given $q.\psi=\{\mathit{coffee, tea}\}$, the basic algorithm first retrieves $D_\psi=\{p_1,p_2,p_3,p_5,p_6,p_7\}$ using the inverted file shown in Table~\ref{tab:invertedfile}. Then, the two sorted lists are calculated, i.e., $\mathit{slist}=((p_6,0.1),(p_3,$ $0.11),(p_5,0.15),(p_7,0.19)$, $(p_2,0.2),(p_1,0.25))$ and $\mathit{tlist}=((p_7,\\0),(p_3,$ $0.5),(p_5,0.5),(p_6,$ $0.5),(p_1,0.6),(p_2,$ $0.6))$. By accessing the two sorted lists in parallel, $p_7$ is obtained first and is used to obtain a cluster. However, the neighborhood of $p_7$ is sparse, so $p_7$ is marked as noise and is removed from lists $\mathit{slist}$ and $\mathit{tlist}$. Next, $p_6$ is also marked as noise and is removed from  $\mathit{slist}$ and $\mathit{tlist}$. The algorithm proceeds to consider $p_3$ and obtains a cluster $R=\{p_3,p_5\}$. Then $\tau$ is set to the score of $R$, i.e., 0.315. The two sorted lists become $\mathit{slist}=((p_2,0.2),(p_1,0.25))$ and $\mathit{tlist}=((p_1,0.6),(p_2,0.6))$. The bound is calculated as  $\mathit{bound} = 0.5 \times 0.2 + 0.5 \times 0.6=0.4$. Thus, we get $\mathit{bound} \geq \tau$. The algorithm can then safely return $R$ as the top-1 result without processing $p_1$ and $p_2$.
\end{example}

\section{Advanced Approach}\label{sec:advance}
The basic approach is inefficient due to two main reasons.
\begin{itemize}\setlength{\itemsep}{-1pt}
	\item It checks the neighborhoods of all relevant objects with regard to the query keywords, which is expensive.
	\item Checking the neighborhood of an object involves a time-consuming range query on the index.
\end{itemize}
The advanced approach includes three techniques that eliminate the above disadvantages and achieve improved performance.

\subsection{Object Skipping}\label{sec:reduce}
Function $\mathbf{GetCluster}$ tries to find a cluster $R$ containing a relevant object $p$ as a core object. It first determines whether the $\epsilon$-neighborhood of $p$ is dense. If so, cluster $R$ is initialized as the set of relevant objects inside the $\epsilon$-neighborhood of $p$. We consider an object as examined if its neighborhood has been checked. Next, the relevant objects other than $p$ inside the $\epsilon$-neighborhood of $p$ are examined one by one. If a neighborhood under consideration is dense, the newly found relevant objects inside it are added to cluster $R$. This way, cluster $R$ is finalized when each relevant object has been examined. However, it is possible to get cluster $R$ by examining only a portion of the relevant objects. Consider the example in Figure~\ref{fig:reduce}. The neighborhoods of the four objects $p_1$, $p_2$, $p_3$, and $p_4$ are illustrated by dashed and solid circles. It can be seen that the neighborhood of $p_4$ (solid circle) is covered by the union of the neighborhoods of $p_1$, $p_2$, and $p_3$ (dashed circles). In this case, checking the neighborhood of $p_4$ is unnecessary after having examined $p_1$, $p_2$, and $p_3$, since the objects inside the neighborhood of $p_4$ are guaranteed to have already been considered. In other words, examining $p_4$ cannot contribute more objects to the cluster. Based on this observation, we define a skipping rule that reduces the number of relevant objects to be examined, and we design an algorithm $\mathbf{OS}$ that implements the rule.

\begin{figure}[h]
	%\vspace{-4ex}
	\begin{center}
		\includegraphics[width=0.3\columnwidth]{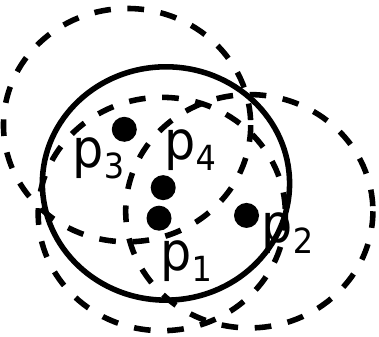}
		%\vspace{-4ex}    
		\caption{Neighborhoods of Objects}
		\label{fig:reduce}
		\vspace{-4ex}
	\end{center}
\end{figure}

\begin{skippingrule}
	Let $S=(p_1,p_2,\cdots,p_n)$ be the order in which a set of objects is examined. Object $p_i$ $(i > 1)$ can be skipped if the neighborhood of $p_i$ is fully covered by the union of the neighborhoods of the objects examined before $p_i$, i.e., if $N_\epsilon(p_i) \subset \cup_{1 \leq j < i}N_\epsilon(p_j)$, where $N_\epsilon(p_i)$ is represented as a circular region centered at $p_i$ and with radius $\epsilon$.
\end{skippingrule}

The skipping rule is effective when given a good ordering $S$. Consider the example in Figure~\ref{fig:reduce}. If $S=(p_1,p_2,p_4,p_3)$, no object can be skipped. However, if $S=(p_1,p_2,p_3,p_4)$, object $p_4$ can be skipped. Intuitively, if the union of the neighborhoods of the objects that have been examined covers a large area, the probability of skipping the next object is high. 
%Note that in function $\mathbf{GetCluster}$ (Algorithm~\ref{alg:expand}), list $\mathit{neighbors}$ serves as the order of examining objects. 
We propose an algorithm $\mathbf{OS}$ that implements the skipping rule. It follows Algorithm~\ref{alg:expand} with the following differences.
\begin{itemize}\setlength{\itemsep}{-1pt}
	\item Given an object $p$ and its neighborhood, the objects inside the neighborhood are sorted in descending order of their distance to $p$. The motivation is that the farther the objects are from $p$, the larger the area covered by their neighborhoods is. Referring to lines 2 and 13 in Algorithm~\ref{alg:expand}, the objects returned from $\mathbf{RangeQuery}$ are sorted in descending order of their distances. Let $S(p)$ be the sorted list of the objects inside the neighborhood of $p$. Note that list $\mathit{neighbors}$ (line 2 in Algorithm~\ref{alg:expand}) maintains the objects to be examined. It is initialized as the sorted list $S(p)$. For each object $p_i$ in $S(p)$, the sorted list $S(p_i)$ of $p_i$ is appended to $S(p)$. The algorithm terminates when $S(p)$ is exhausted.
	\item Each time, when about to check the neighborhood of an object (line 13 in Algorithm~\ref{alg:expand}), the skipping rule is considered. If the rule applies, the algorithm continues to process the next object. The implementation of the skipping rule involves the testing of whether a circle is covered by the union of several circles. This can be accomplished using a recursive subdivision of the circle by non-overlapping squares~\cite{DBLP:journals/tods/WuYJ13}.
\end{itemize}

\subsection{Spatially Gridded Posting Lists}\label{sec:histogram}
In the previous section, we proposed a technique that reduces the number of objects to be exmained. However, for those objects that cannot be skipped, checking their neighborhoods involves time-consuming range queries. The result of checking a neighborhood is that the neighborhood is either dense or sparse. In this section, we design \textbf{spatially gridded posting lists} (SGPL) to estimate the selectivity of a range query on the IR-tree, such that sparse neighborhoods can be pruned without issuing expensive range queries.

An $n \times n$ grid is created on the data set. For each word $w$, a spatially gridded posting list is constructed covering all the objects that contain $w$. Let $D_{w_i}$ be the set of objects containing word $w_i$. The SGPL of $w_i$ is a sorted list of entries, where each entry takes the form  $(c_j,S_{w_i,c_j})$ where $c_j$ (sorting key) is the index value of a grid cell $C_{c_j}$ and $S_{w_i,c_j}$ is a set of objects that belong to $D_{w_i}$ and that are located in grid cell $C_{c_j}$, i.e., $\forall p \in S_{w_i,c_j}(p \in D_{w_i} \wedge p.\lambda \in C_{c_j})$.
Grid cells are indexed using a space filling curve, e.g., a Hilbert curve or a Z-order curve. The SGPLs of all distinct words in the data set are organized similarly to the inverted file. Empty cells are not stored. Given a word, its SGPL can be retrieved straightforwardly.

\begin{example}
	Figure~\ref{fig:sgpl}(a) illustrates a $4 \times 4$ grid on  the 7 objects in Example~\ref{ex:pex}. Grid cells are indexed using a 2-order Z-curve. Numbers in italics are the Z-order derived keys for the cells. 
	Figure~\ref{fig:sgpl}(c) shows the SGPLs for words `coffee' and `tea'. For example, the first entry in the SGPL for `coffee' tells that the document of $p_3$ contains `coffee' and $p_3$ is located in the cell with index value 3.
\end{example}

Given a set $q.\psi$ containing $m$ query keywords, the corresponding $m$ SGPLs are merged to estimate the selectivity of the range query. We define a merging operator $\bigoplus$ on several SGPLs that produces a count for each non-empty cell. The count for cell $C$ is the cardinality of the union of the sets of objects located in $C$ from different SGPLs, i.e.,
\begin{equation}
	\bigoplus_{w_i \in q.\psi}\mathit{SGPL}_{w_i}=\{(c_j,|\bigcup_{w_i \in q.\psi}S_{w_i,c_j} |\},
\end{equation}
where $q.\psi$ is the query keywords.

\begin{example}
	Consider the example SGPLs in Figure~\ref{fig:sgpl}(c). The third row is the result of merging the SGPLs of words `coffee' and `tea'. For example, the entry $(3,2)$ tells that the cell with index value 3 contains 2 objects after merging.
\end{example}

\begin{figure}[h]
	%\vspace{-2ex}
	\begin{center}
		\begin{tabular}{c}
			\begin{tabular}{cc}
				\includegraphics[width=0.4\columnwidth]{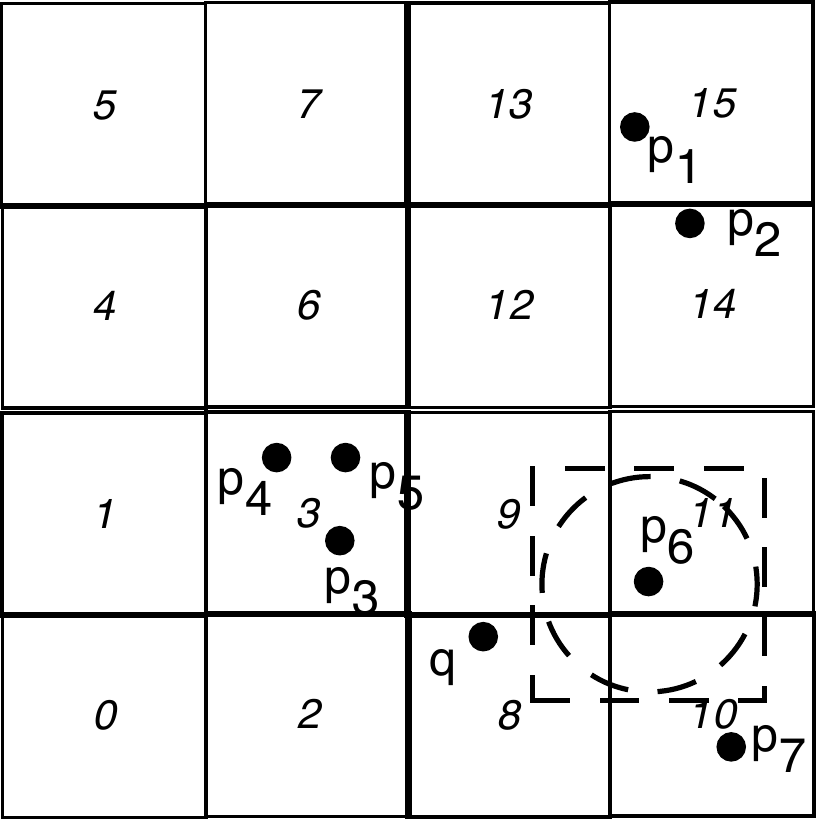}
				&  
				\includegraphics[width=0.4\columnwidth]{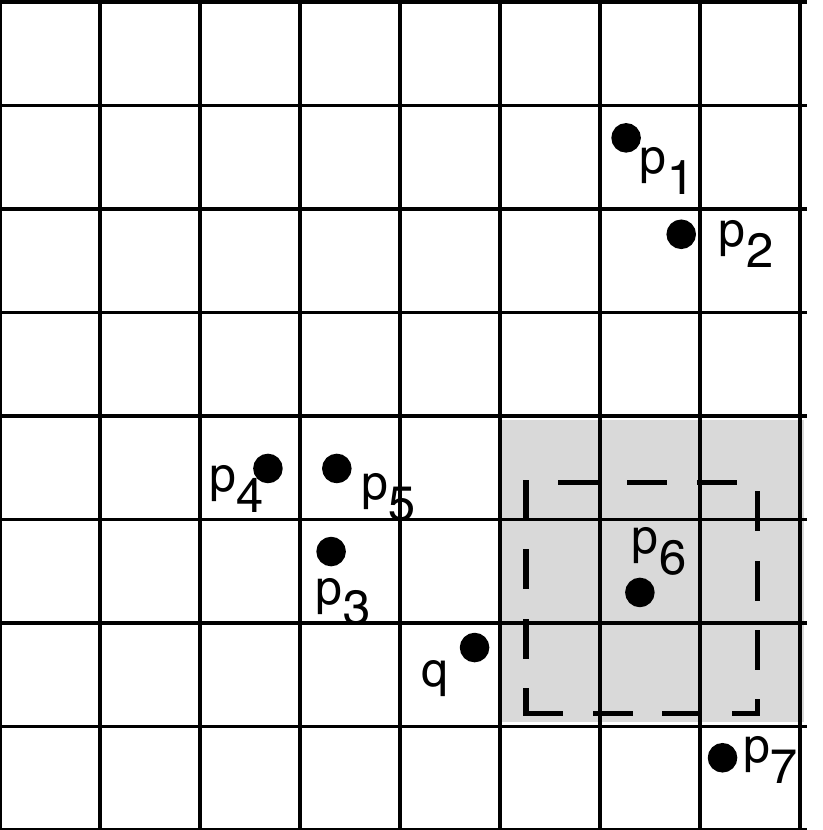} \\
				(a) $4 \times 4$ Grid &(b) $8 \times 8$ Grid
			\end{tabular}
			\\
			\\
			\small
			\begin{tabular}{|l|l|}
				\hline 
				coffee & $(3,\{p_3\}), (10,\{p_7\}), (11,\{p_6\}), (14,\{p_2\}),$ \\
				&$(15,\{p_1\})$\\
				\hline
				tea & $(3,\{p_5\}), (10,\{p_7\}), (14,\{p_2\}), (15,\{p_1\})$ \\
				\hline
				coffee $\bigoplus$ tea & $(3,2), (10,1), (11,1), (14,1), (15,1)$\\
				\hline
			\end{tabular}      
			\\
			(c) SGPL
		\end{tabular}
		\vspace{-2ex}
		\caption{Example Spatially Gridded Posting Lists}
		\label{fig:sgpl}
		\vspace{-4ex}
	\end{center}
\end{figure}

The merged result of the SGPLs of the query keywords is used to estimate the selectivity of the circular range query $q_c$ centered at an object $p$ with radius $\epsilon$ (e.g., the dashed circle in Figure~\ref{fig:sgpl}(a)). We approximate the circle $q_c$ as its circumscribed square $q_s$ (e.g., the dashed square in Figure~\ref{fig:sgpl}(a)). The sum of the counts of the grid cells that intersect square $q_s$ in the merged SGPLs of the query keywords is returned as the selectivity. Note that it is not necessary to merge the whole SGPL of each query keyword. For the sake of efficiency, only the cells that intersect $q_s$ need to be considered. We thus define a parameterized merging operator $\bigoplus(q_s)$ as follows.
\begin{equation}
	\bigoplus_{w_i \in q.\psi}(q_s)\mathit{SGPL}_{w_i}=\{(c_j,|\bigcup_{w_i \in q.\psi}S_{w_i,c_j} |~~|~~C_{c_j} \cap q_s \neq \emptyset \}
\end{equation}
Consider the example in Figure~\ref{fig:sgpl} where $q_s$ is the dashed square. We have $\mathit{coffee} \bigoplus (q_s) \mathit{tea}=\{(10,1), (11,1)\}$. Based on the coding scheme of the space-filling curve, we adopt an efficient existing algorithm~\cite{Lawder:2000:USC:646102.681186} to retrieve the cells that intersect the query range $q_s$.

The derived selectivity serves as an upper bound on the number of objects falling inside circle $q_c$.
If the selectivity (upper bound) is less than $q.\mathit{minpts}$, the $\epsilon$-neighborhood of object $p$ is guaranteed to be sparse. Otherwise, function $\mathbf{RangeQuery}$ is called to compute the exact number of objects in the $\epsilon$-neighborhood of object $p$. Hence, some sparse $\epsilon$-neighborhoods can be  found efficiently. However, it is not guaranteed that all sparse relevant $\epsilon$-neighborhoods can be found using the SGPLs of query keywords, since the selectivity is not always a tight upper bound due to the granularity of the grid and the use of the circumscribed square of the circular range.

\begin{example}
	Suppose $q.\mathit{minpts}=2$, meaning that a dense $\epsilon$-neighborhood should contain at least 2 objects. In Figure~\ref{fig:sgpl}(a), the dashed square $q_s$ approximates the $\epsilon$-neighborhood of  $p_6$. Since $q_s$ intersects grid cells 8, 9, 10, and 11 in the merged result of the SGPLs of words  `coffee' and `tea', the selectivity is 2, i.e., the number of objects covered by the four grid cells. Hence, the $\epsilon$-neighborhoods of $p_6$ is conservatively assessed as not sparse, and function $\mathbf{RangeQuery}$ has to be called to compute the exact number of objects in the $\epsilon$-neighborhood. However, if using the finer grid in Figure~\ref{fig:sgpl}(b), the selectivity is 1 ($< q.\mathit{minpts}$). Only the gray cells intersect the query range. Thus, the $\epsilon$-neighborhood of $p_6$ is guaranteed to be sparse, and there is no need to call function $\mathbf{RangeQuery}$.
\end{example}

We observe that using a finer grid may improve the quality of the selectivity estimation so that expensive $\mathbf{RangeQuery}$ operations are avoided. However, the cost of the selectivity estimation increases when using a finer grid. In the empirical study, we study how the performance is affected by the granularity of the grid.

\subsection{FastRange}\label{sec:fastrange}
In addition to supporting selectivity estimation, SGPLs are able to support the processing of range queries that are issued on the IR-tree. We propose an algorithm $\mathbf{FastRange}$ that handles range queries on SGPLs. Before presenting the algorithm, we first override the parameterised merging operator $\bigoplus(q_s)$ as $\overline{\bigoplus}(q_s)$.
\begin{equation}
	\overline{\bigoplus_{w_i \in q.\psi}}(q_s)\mathit{SGPL}_{w_i}=\{(c_j,\bigcup_{w_i \in q.\psi}S_{w_i,c_j} ~~|~~C_{c_j} \cap q_s \neq \emptyset \}
\end{equation}
The result of operator $\bigoplus(q_s)$ records the \textit{number} of objects inside each cell intersecting query range $q_s$, while the result of operator $\overline{\bigoplus}(q_s)$ contains the \textit{set of the identifiers} of the objects inside each cell intersecting query range $q_s$.
Algorithm~\ref{alg:fastrange} shows the pseudo code of $\mathbf{FastRange}$ that takes two arguments: $\mathit{list}$ is the result of operator $\overline{\bigoplus}(q_s)$, and $q_c$ is a circular region centered at an object $p$ and with radius $\epsilon$. If a cell $c$ from $\mathit{list}$ is completely inside the query range $q_c$, all the objects in $c$ are added to the result (lines 3 and 4). If a cell $c$ intersects $q_c$, only objects in $c$ that have distance to $p$ no greater than $\epsilon$ are added to the result (lines 6--8).
%\vspace{-4ex}
\begin{algorithm}[h]
	\caption{$\mathbf{FastRange}$(SGPL $\mathit{list}$, Range $q_c$)}\label{alg:fastrange}
	\begin{algorithmic}[1]
		\State $\mathit{result \leftarrow \emptyset}$;
		\For{each cell $c \in \mathit{list}$}
		\If{$c$ is completely inside the query range $q_c$}
		\State All the objects inside $c$ are added to $\mathit{result}$	
		\Else \Comment{$c$ intersects the query range}
		\For{each object $o$ inside $c$}
		\If{$\eucdist{o}{p} \leq \epsilon$}
		\State Add $o$ to $\mathit{result}$;
		\EndIf
		\EndFor
		\EndIf	
		\EndFor
	\end{algorithmic}
\end{algorithm}
%\vspace{-6ex}
\section{Empirical Studies}\label{sec:exp}
We conduct empirical studies to evaluate our proposals.
Section~\ref{sec:setup} presents the data set, queries, parameters, and
platform used in the experiments. The proposals for $k$-STC queries are evaluated in Section~\ref{sec:evaluation}.

\subsection{Experimental Setup}\label{sec:setup}
We use a real data set from TripAdvisor that contains 100,789 restaurants. Each restaurant has a text description of length 3 words on average. The total number of distinct words is 202. The data set is small. But as we are not aware of other public real data sets that  match our problem motivation, we generate larger data sets for scalability evaluation. Specifically, we generate data sets of size 200K, 400K, 600K, 800K, and 1M using TripAdvisor in the following way, such that the distribution of the real data set is roughly maintained. For a randomly picked object $p$ in TripAdvisor, we generate a new object whose location is obtained by slightly shifting the location of $p$ and whose text description is the same as that of $p$. 

We generate 4 query sets in the space of the dataset, in which the number of keywords
is 1, 2, 3, and 4, respectively. Each set comprises 100 queries. Queries are generated from
objects, and we guarantee that no query has an empty result.
Specifically, to generate a query, we randomly pick an object
in the dataset and take the location of the object as the query
location and randomly choose words from the document of
the object as the query keywords.

We evaluate the performance of the basic approach (Basic), the advanced approach with object skipping (Adv1), the advanced approach with object skipping and selectivity estimation (Adv2), and the advanced approach with object skipping, selectivity estimation, and FastRange (Adv3), under different parameter settings. Table~\ref{tab:parameter} shows the parameter values used in the experiments, where the bold values are default values.
All algorithms were implemented in Java, and an Intel(R) Core(TM) i7-3770 CPU @ 3.40GHz with 16GB main memory
was used for the experiments. All the data structures are memory resident. We report the average elapsed time
cost and the average number of range queries issued needed to compute the $k$-STC queries. 

\begin{table*}
	%\vspace{-4ex}
	\caption{Parameter Values}\label{tab:parameter}
	%\vspace{-4ex}
	\begin{center}
		\begin{tabular}{|c|c|c|}
			\hline
			Interpretation & Parameter & Values \\
			\hline
			Number of requested clusters & $k$ & 5, \textbf{10}, 15, 20 \\
			\hline
			Number of query keywords & $|q.\psi|$ & 1, \textbf{2}, 3, 4 \\
			\hline
			& $\epsilon$ & 0.0001, 0.0005, \textbf{0.001}, 0.005, 0.01 \\
			\cline{2-3}
			Density requirements & $\mathit{minpts}$ &  10, 20, \textbf{50}, 100, 200 \\
			\hline
			Order of Z-curve in SGPL & $h$ & 3, 4, 5, \textbf{6}, 7, 8, 9, 10 \\
			\hline
			%Fanout of the G-tree & $f$ & 16, 64, \textbf{256} \\
			%\hline
			%Query level in the G-tree & $L$ & 1, 2, \textbf{3} \\
			%\hline
			Balancing the weights between  &&\\
			spatial distance	and text relevance & $\alpha$ & 0.1, 0.3, \textbf{0.5}, 0.7, 0.9 \\
			\hline
			Data set size & $|\mathcal{D}|$ & \textbf{100K}, 200K, 400K, 600K, 800K, 1M \\
			\hline
		\end{tabular}
	\end{center}
	%\vspace{-8ex}
\end{table*}%

\subsection{Performance Evaluation}\label{sec:evaluation}
We evaluate our proposals against the basic algorithm under varying parameter settings. 

\stitle{Varying the Number of Keywords $|q.\psi|$}
Figure~\ref{fig:keyword} shows the performance of the four approaches when varying the number of query keywords. Their performance becomes worse as the number of query keywords increases, since more objects are involved so that the search space increases. The computation costs of the advanced algorithms increases more slowly than that of the basic algorithm. As expected, the number of range queries issued (Figure~\ref{fig:keywordrq}) is consistent with the elapsed time (Figure~\ref{fig:keywordcpu}).

\begin{figure}[h]
	\centering
	%	\begin{minipage}[c]{0.48\textwidth}
	\centering
	\subfigure[Elapsed time]{\label{fig:keywordcpu}\includegraphics[width=0.45\columnwidth]{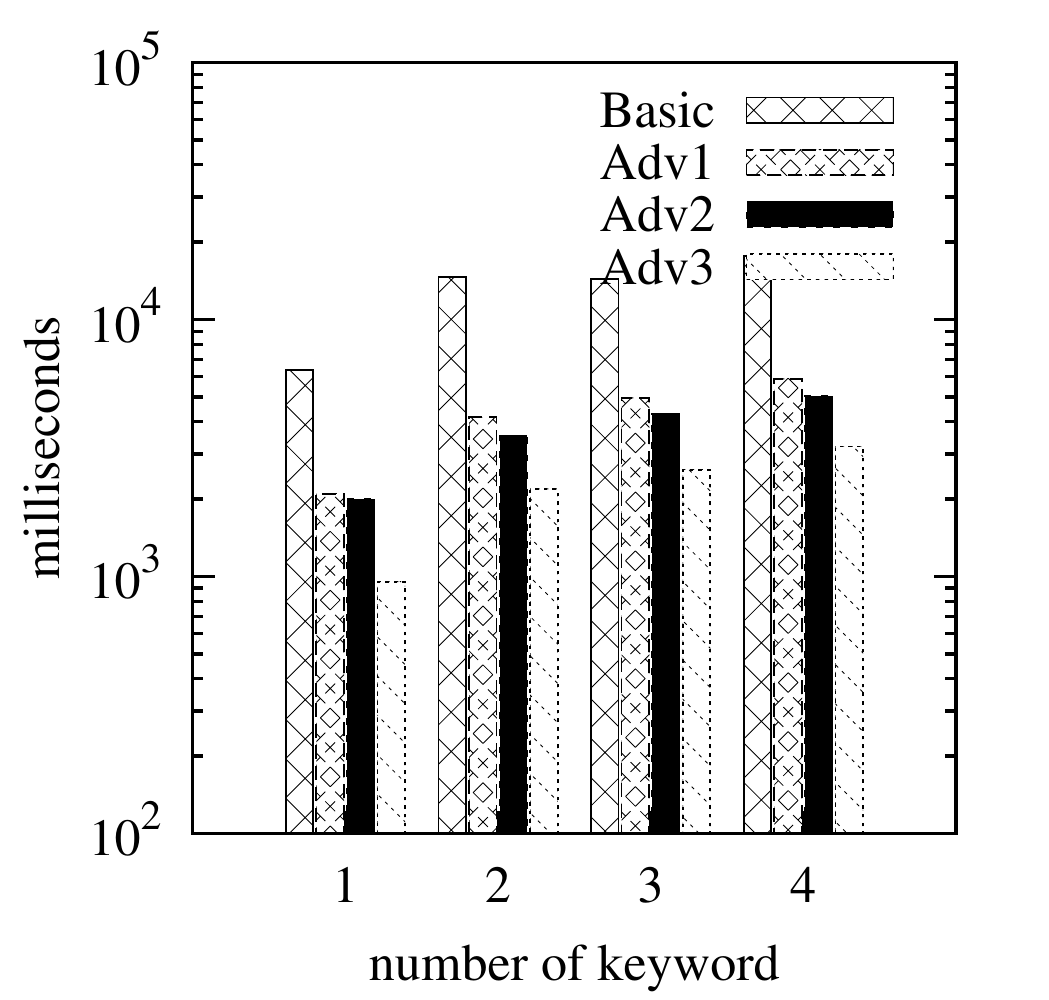}}
	\subfigure[$\#$ of range queries]{\label{fig:keywordrq}\includegraphics[width=0.45\columnwidth]{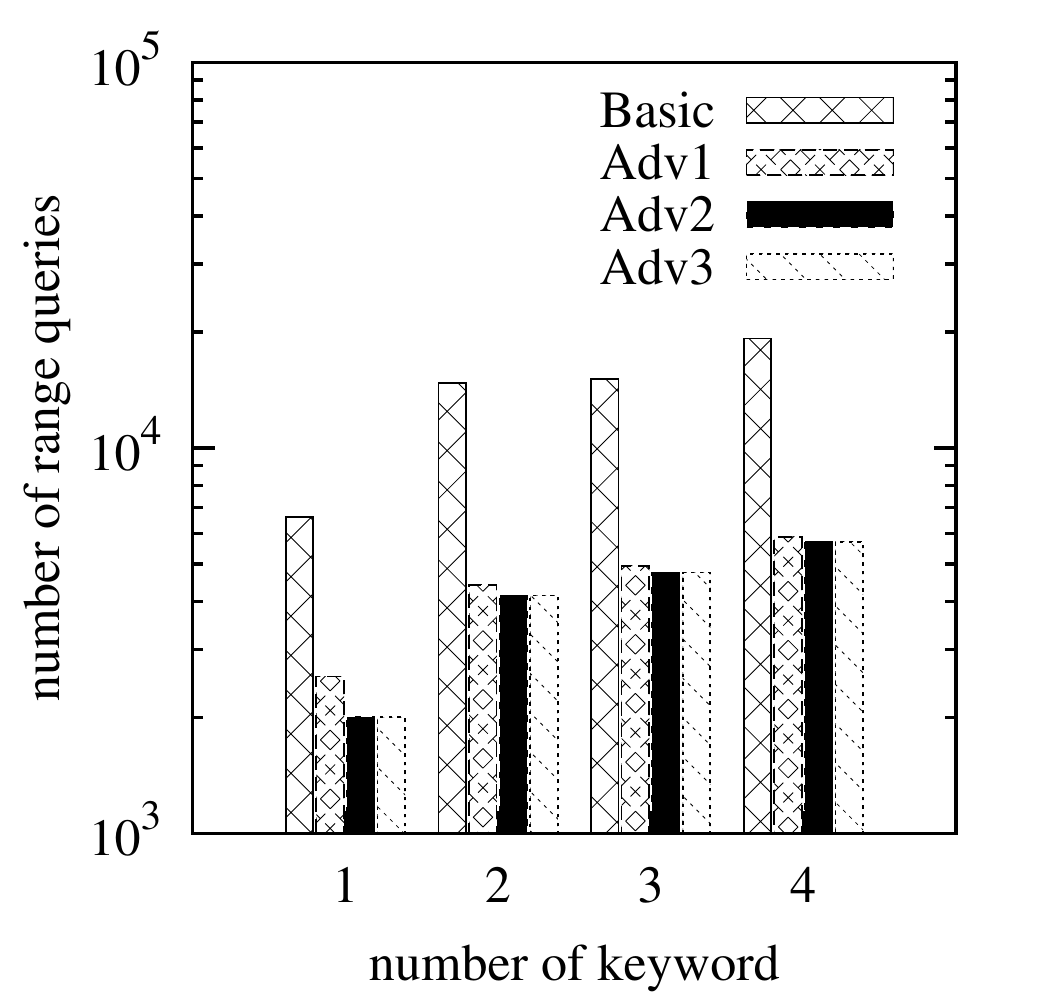}}
	\vspace{-2ex}
	\caption{Varying the Number of Keywords}\label{fig:keyword}
	%	\end{minipage}
	%\vspace{-4ex}
\end{figure}

\stitle{Varying Density Requirement}
The density requirement involves two parameters,  $\epsilon$ and $\mathit{minpts}$. Figure~\ref{fig:epsilon} shows the performance of the four approaches when varying $\epsilon$. Figure~\ref{fig:minpts} shows their performance when varying $\mathit{minpts}$. The $y$ axes are logarithmic. A large $\epsilon$ or a small  $\mathit{minpts}$ indicate a low density requirement. As $\epsilon$ increases, the performance becomes worse. The reason is that a low density requirement renders more objects' neighborhoods dense. Recall the basic idea of our proposals. If the $\epsilon$-neighborhood of a relevant object satisfies the density requirement, we need to examine the relevant objects inside the $\epsilon$-neighborhood to expand the cluster.  Hence, more dense neighborhoods incur additional computation cost. However, the performance is not sensitive to $\mathit{minpts}$ in range $[10,200]$ given that $\epsilon=0.001$. To altering the density requirement, we may choose to either vary $\epsilon$ or vary $\mathit{minpts}$. It is of no interest in finding the value range of $\mathit{minpts}$ that affects the performance for the interest of space.
\begin{figure}[h]%\vspace{-4ex}
	%\begin{minipage}[c]{0.48\textwidth}
	\centering
	\subfigure[Elapsed time]{\label{}\includegraphics[width=0.45\columnwidth]{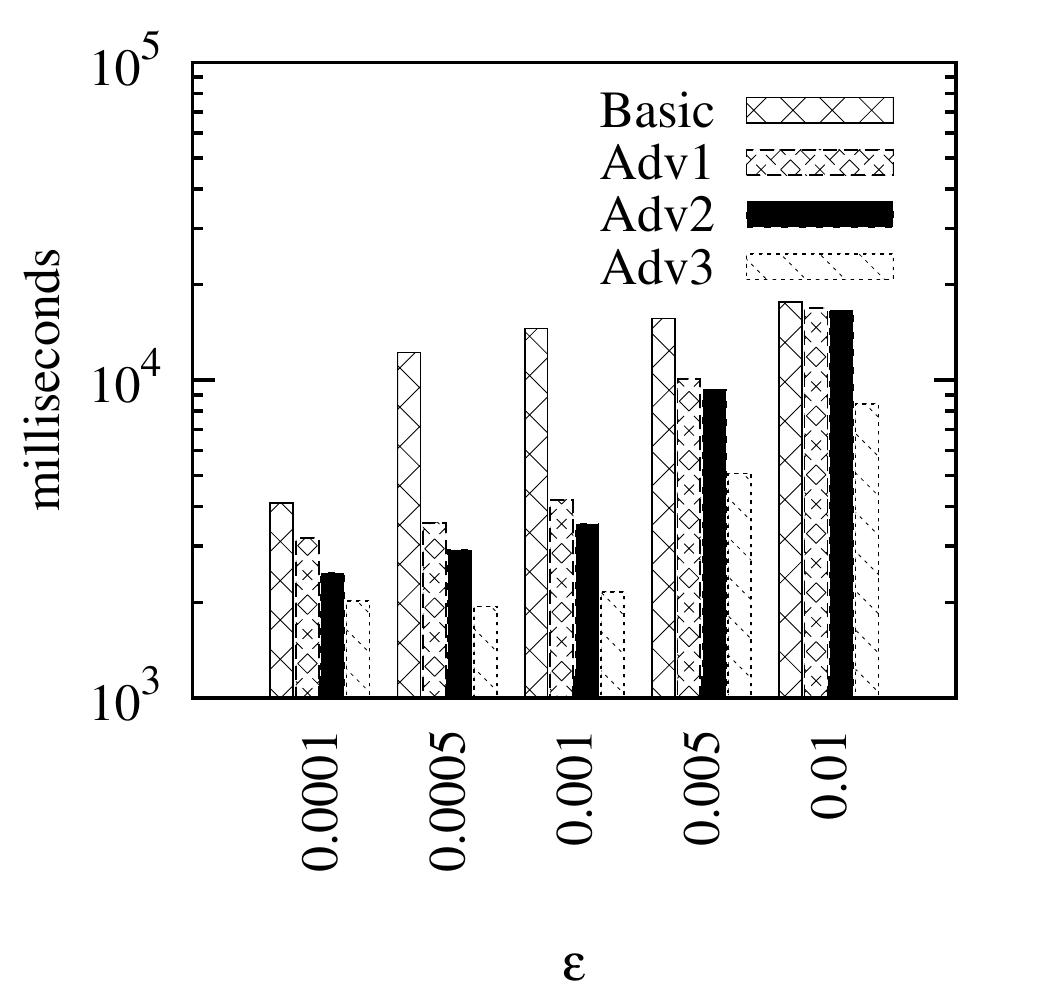}}
	\subfigure[$\#$ of range queries]{\label{}\includegraphics[width=0.45\columnwidth]{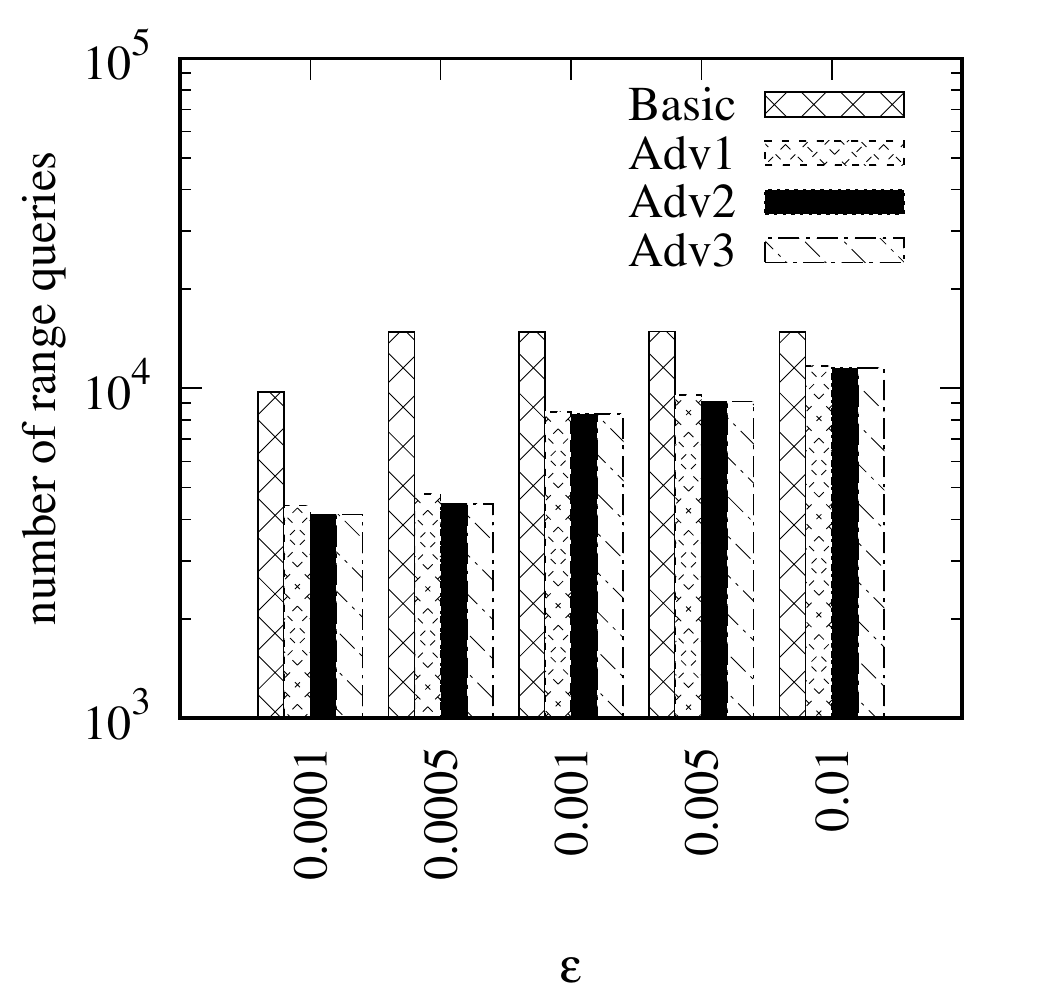}}
	\vspace{-2ex}
	\caption{Varying $\epsilon$}\label{fig:epsilon}%\vspace{-4ex}
	%\end{minipage}
\end{figure}

\begin{figure}[h]
	%\begin{minipage}[c]{0.48\textwidth}
	\centering
	\subfigure[Elapsed time]{\label{}\includegraphics[width=0.45\columnwidth]{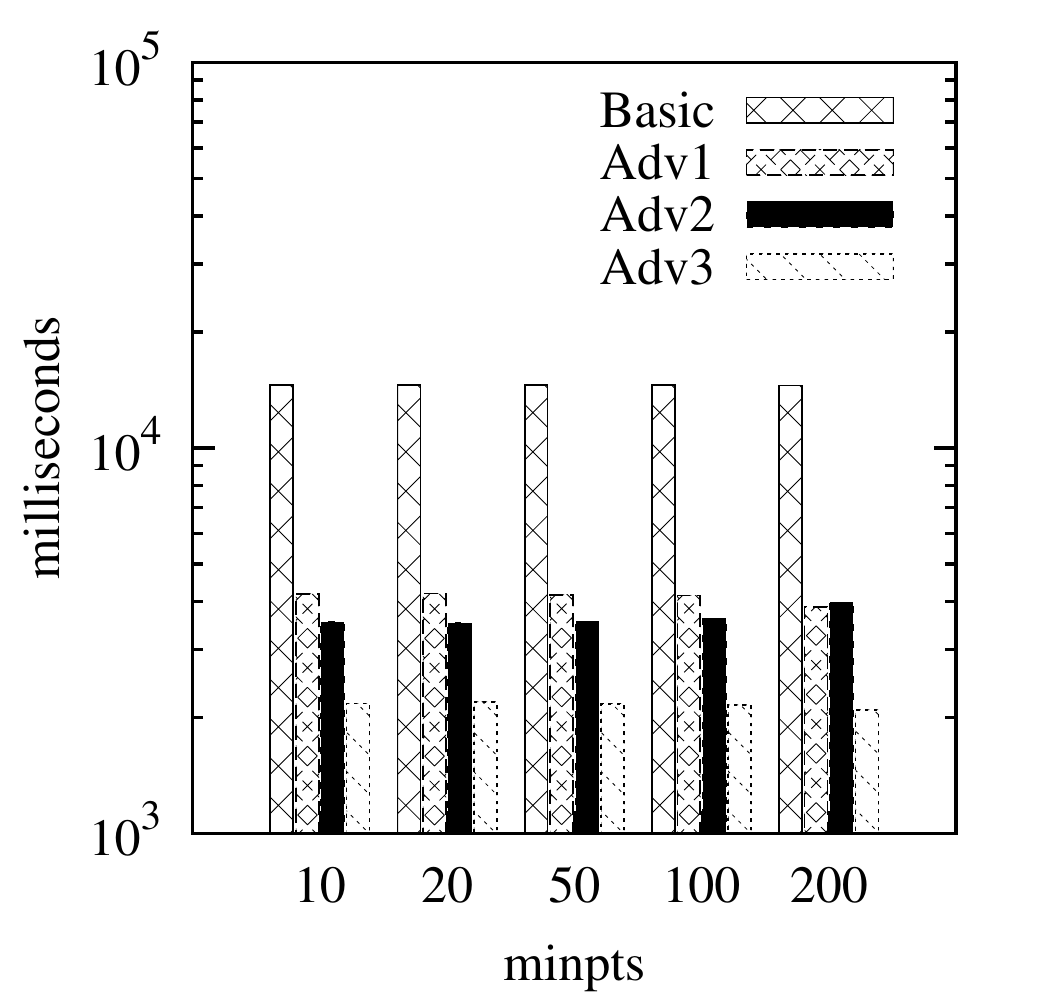}}
	\subfigure[$\#$ of range queries]{\label{}\includegraphics[width=0.45\columnwidth]{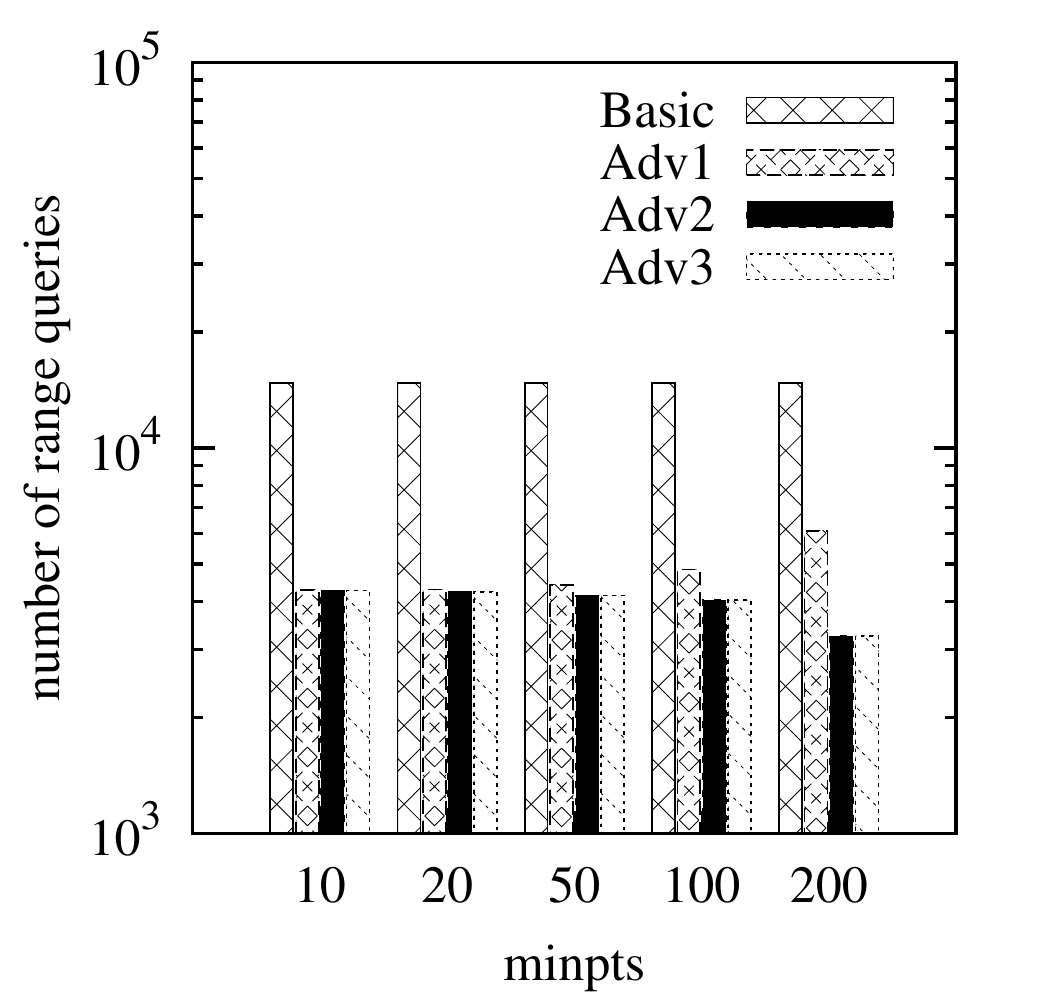}}
	\vspace{-2ex}
	\caption{Varying $\mathit{minpts}$}\label{fig:minpts}%\vspace{-2ex}
	%	\end{minipage}
\end{figure}

\stitle{Varying $\alpha$}
Figure~\ref{fig:alpha} shows the elapsed time and the number of range queries of the four approaches when varying $\alpha$ that balances the weight between the spatial distance and the text relevance in the scoring function. Their performance is insensitive to parameter $\alpha$. Hence, our approach can guarantee a good performance no matter what users' preferences are.

\begin{figure}[h]
	%\begin{minipage}[c]{0.48\textwidth}
	\centering
	\subfigure[Elapsed time]{\label{}\includegraphics[width=0.45\columnwidth]{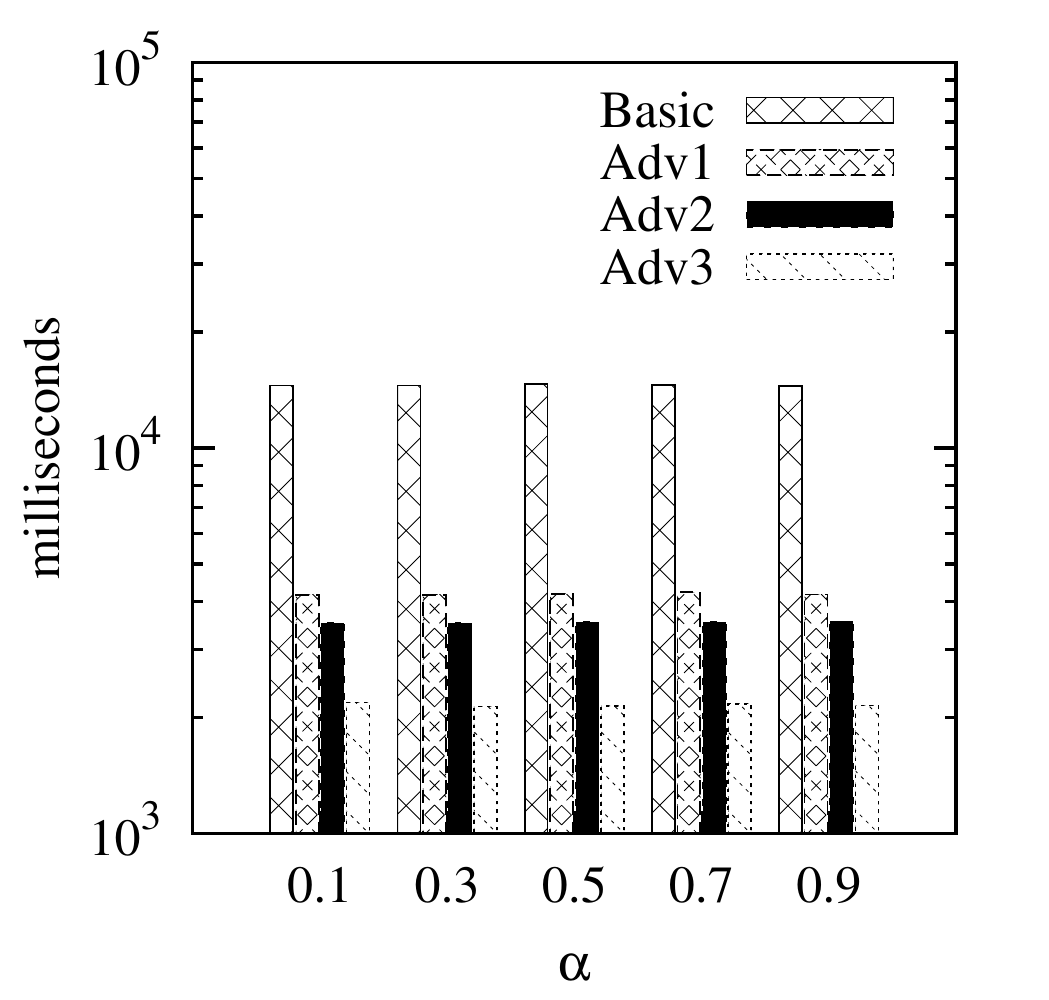}}
	\subfigure[$\#$ of range queries]{\label{}\includegraphics[width=0.45\columnwidth]{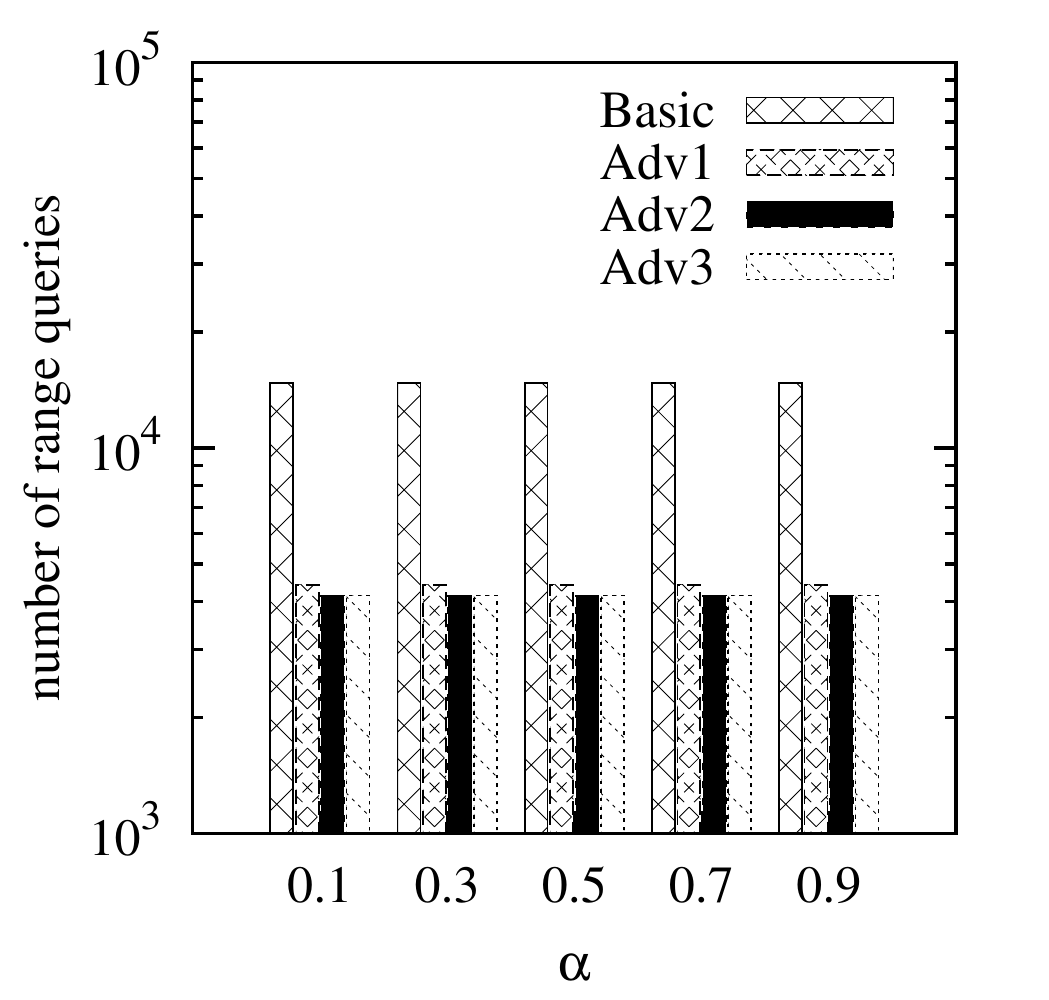}}
	\vspace{-2ex}
	\caption{Varying $\alpha$}\label{fig:alpha}%\vspace{-4ex}
	%\end{minipage}
\end{figure}

\stitle{Varying the Number $k$ of Requested Clusters}
Figure~\ref{fig:topk} shows the elapsed time and the number of range queries of the four approaches when varying the number $k$ of requested clusters. Their performance is insensitive to parameter $k$. For a small $k$, the reason may be that the threshold in the basic algorithm is not tight enough to prune the clusters that are beyond the top-$k$ result early. For a large $k$, the reason may be that the total number of clusters found from the whole data set cannot exceed $k$. 

\begin{figure}[h]
	%	\begin{minipage}[c]{0.48\textwidth}
	\centering
	\subfigure[Elapsed time]{\label{}\includegraphics[width=0.45\columnwidth]{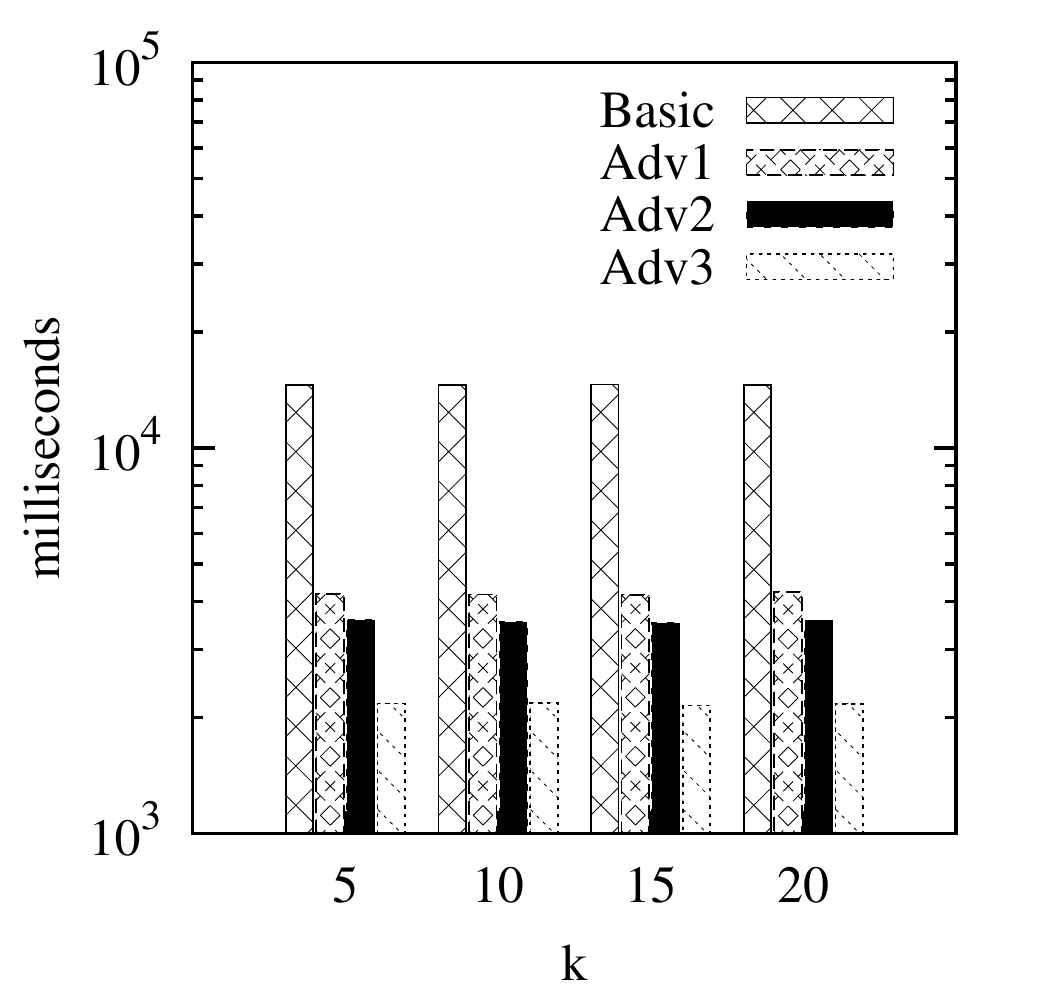}}
	\subfigure[$\#$ of range queries]{\label{}\includegraphics[width=0.45\columnwidth]{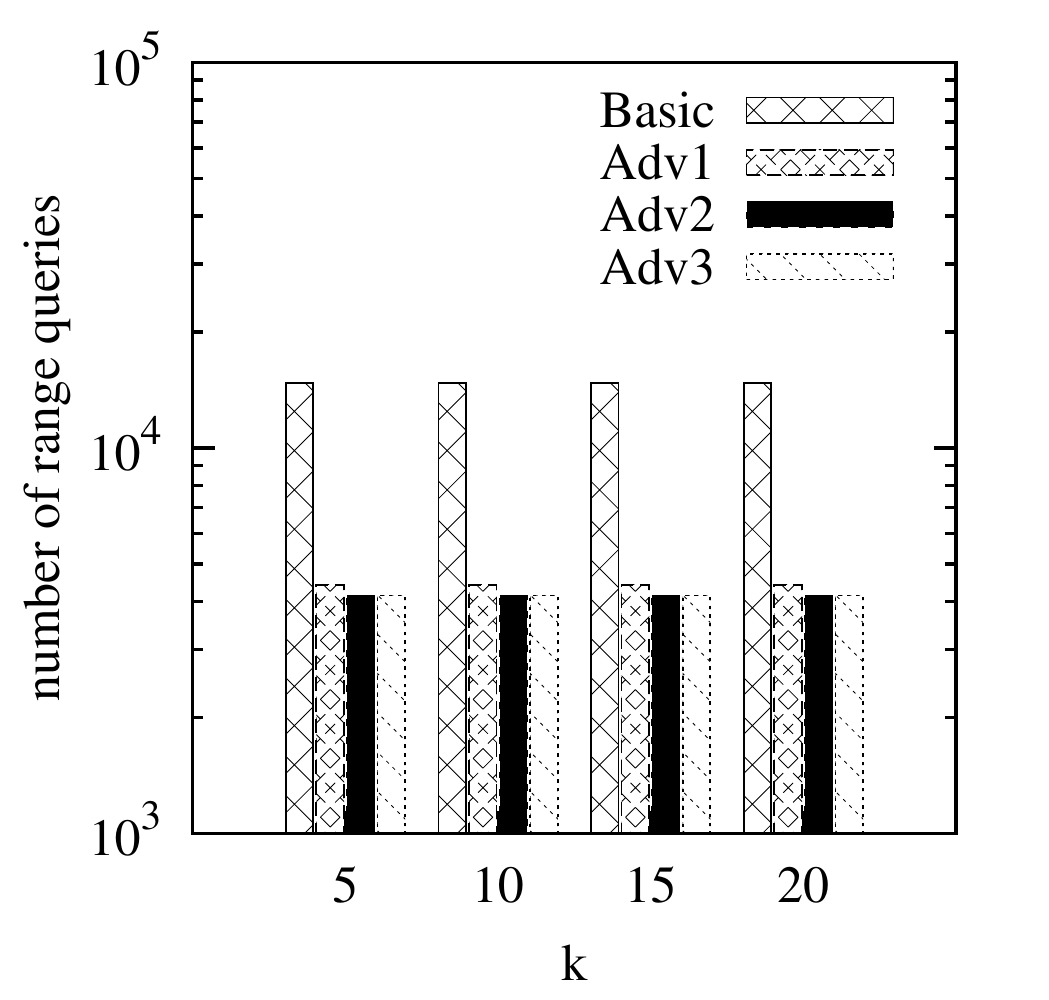}}
	\vspace{-2ex}
	\caption{Varying $k$}\label{fig:topk}%\vspace{-2ex}
	%	\end{minipage}
\end{figure}

\stitle{Scalability}
Figure~\ref{fig:datasize} shows how the performance of the basic algorithm and the advanced algorithm change as the size of the data set increases. We observe that both algorithms scale linearly. However, the computation cost of the advanced algorithm increases more slowly that does that of the basic algorithm.
\begin{figure}[h]
	%	\begin{minipage}[c]{0.48\textwidth}
	\centering
	\subfigure[Elapsed time]{\label{}\includegraphics[width=0.45\columnwidth]{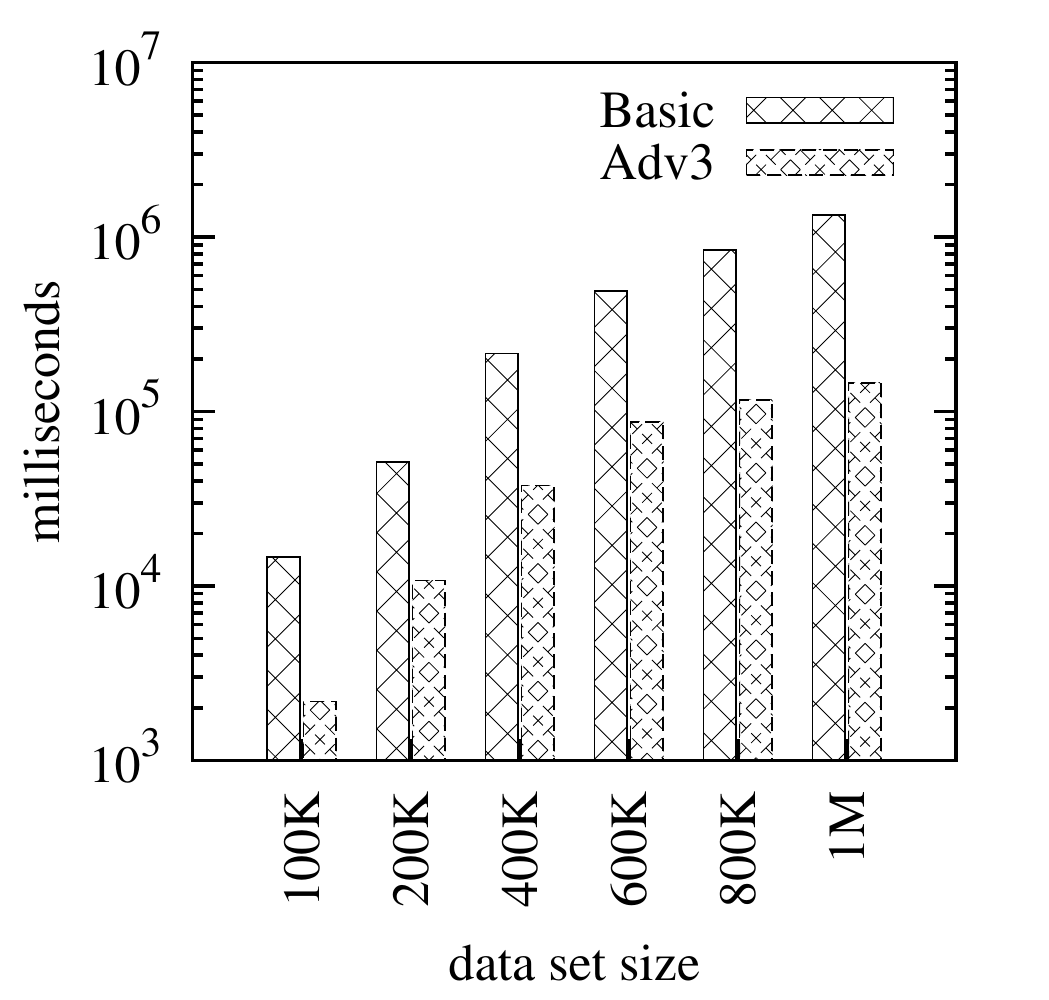}}
	\subfigure[$\#$ of range queries]{\label{}\includegraphics[width=0.45\columnwidth]{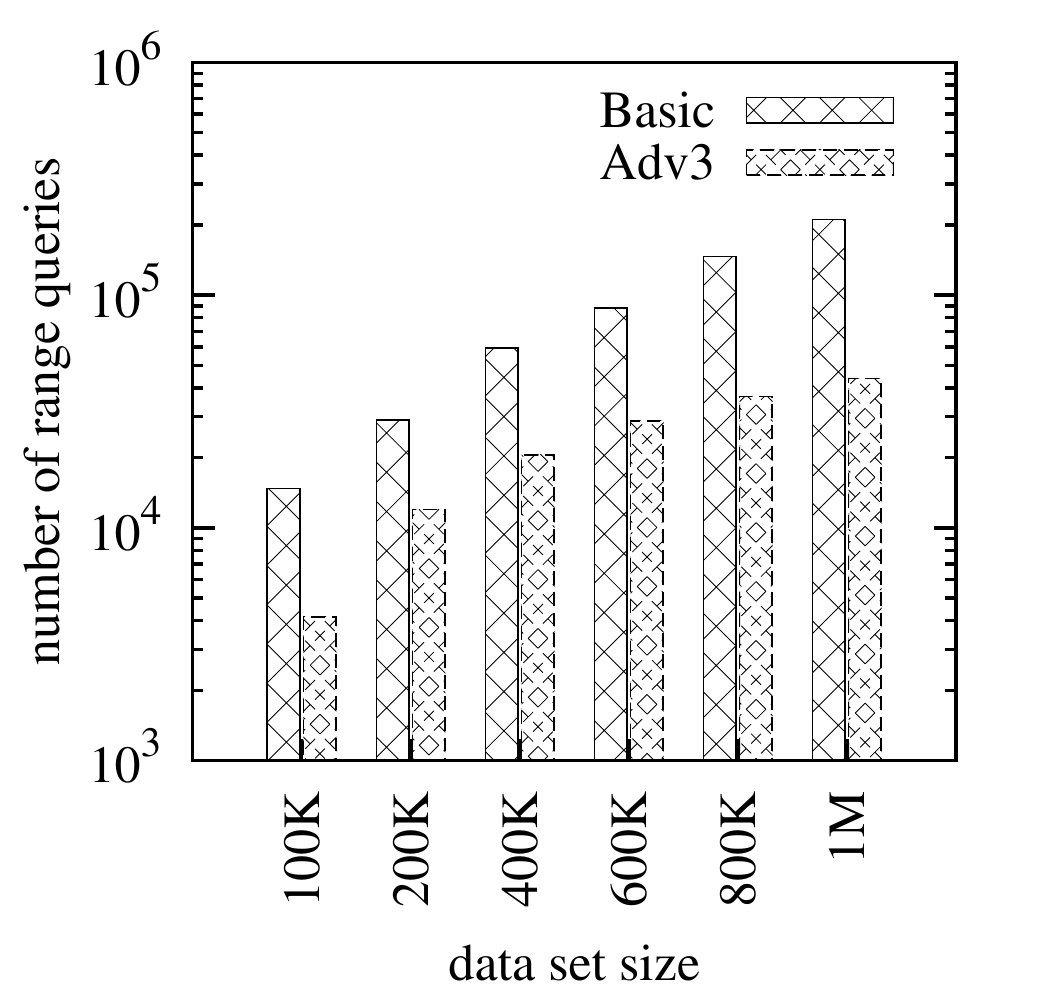}}
	\vspace{-2ex}
	\caption{Varying Data Size}\label{fig:datasize}%\vspace{-4ex}
	%	\end{minipage}
\end{figure}

\stitle{Varying the Order of Z-Curve $h$ in SGPL}
SGPL is constructed based on grid system over the data set, indexed by  a space filling curve. We adopt the Z-curve in our evaluation. Other types of space filling curves are also applicable. The order $h$ of the Z-curve defines the granularity of the grid system that affects the performance of SGPL. A large $h$ provides a finer granularity, so that the ability to estimate the selectivity of range queries and the ability to detect sparse $\epsilon$-neighborhoods are improved. The evaluation result in Figure~\ref{fig:kht} makes this point. As $h$ increases, the performance improves. We observe that when $h$ exceeds 6, the performance of SGPL becomes stable. This means that the selectivity estimation ability is close to optimal. It is not necessary to devote efforts to construct an SGPL using an even finer grid.

\begin{figure}[h]
%	\vspace{-2ex}
	\centering
	\subfigure[Elapsed time]{\label{}\includegraphics[width=0.45\columnwidth]{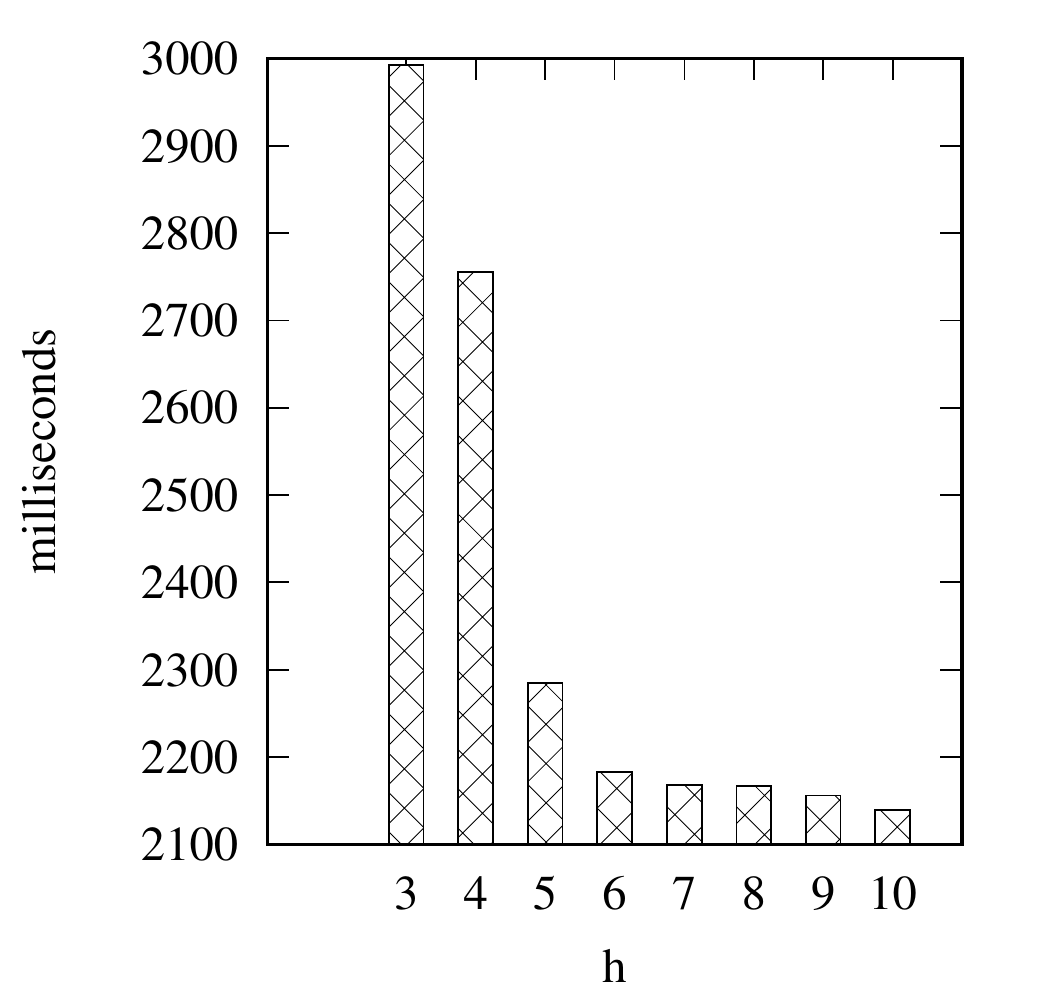}}
	\subfigure[$\#$ of range queries]{\label{}\includegraphics[width=0.45\columnwidth]{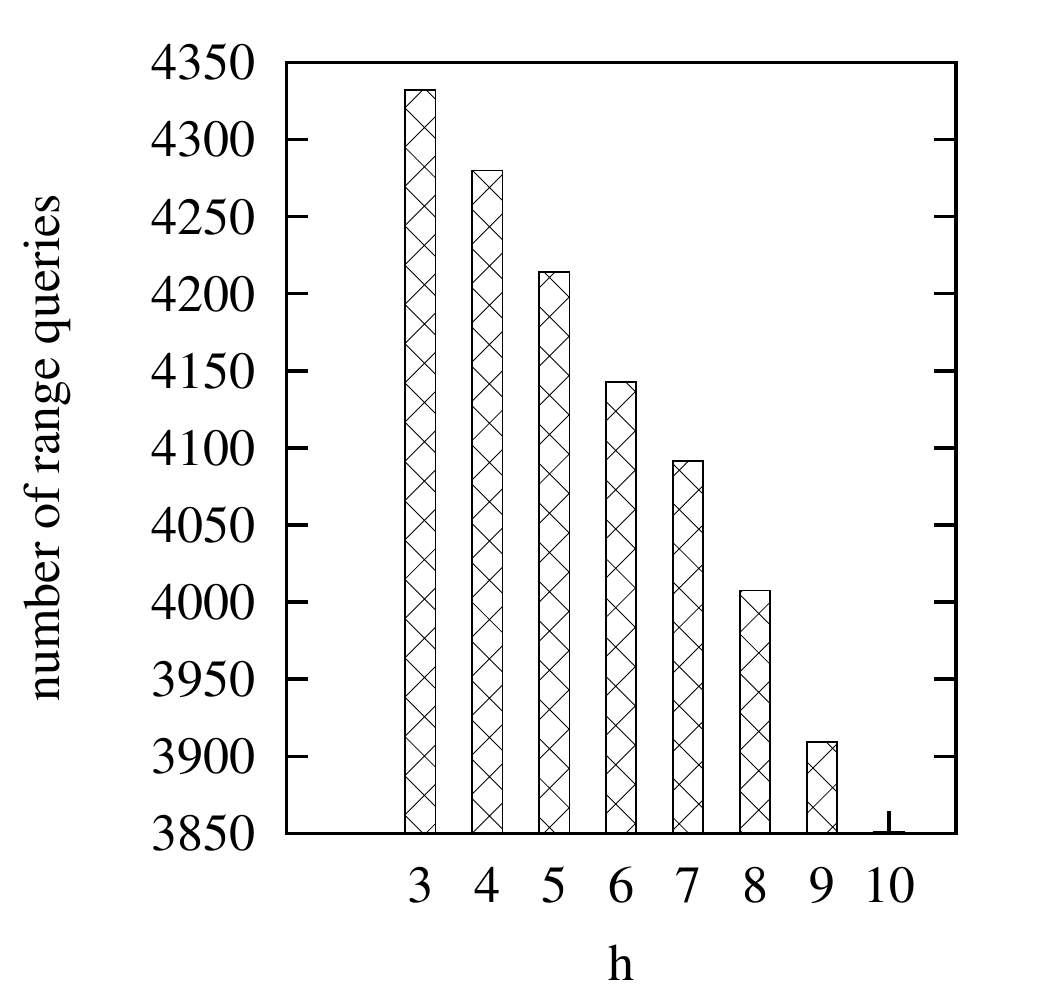}}
	\vspace{-2ex}
	\caption{Varying the Order $h$ in SGPL}\label{fig:kht}%	\vspace{-4ex}
\end{figure}

\stitle{Summary}
Overall, the best advanced approach outperforms the basic approach by an order of magnitude. The elapsed time and the number of range queries are proportional to each other, which indicates that the time consuming part of the $k$-STC queries is the range queries. The proposed advanced approach significantly reduce the cost incurred by the range queries.

\section{Related Work}\label{sec:related}
The problem studied in the paper is related to two topics: spatial textual search, surveyed in Section~\ref{sec:sts}, and density-based clustering, surveyed in Section~\ref{sec:clustering}.

\subsection{Spatial Textual Search}\label{sec:sts}

Spatial keyword search has attracted much attention in recent years. A spatial keyword query retrieves spatial web objects (e.g., web content about shops and restaurants) that are spatially and textually relevant to query arguments. Several efficient geo-textual indexes have been proposed to support spatial keyword queries. Most indexes combine the R-tree for spatial indexing and inverted files/signature files for text indexing, such as the IR$^2$-tree~\cite{DBLP:conf/icde/FelipeHR08}, the IR-tree~\cite{DBLP:journals/vldb/WuCJ12}, and the S2I~\cite{DBLP:conf/ssd/RochaGJN11}. These hybrid index structures are able to utilize both the spatial and textual information to prune the search space when processing a query. Our proposal is orthogonal to the above and similar  indexes. We adopt the  IR-tree~\cite{DBLP:journals/vldb/WuCJ12} in the basic algorithm. Other R-tree based indexes can also be used, and the resulting performance can be expected to be comparable to that of the basic algorithm.

Next, a range of studies investigate interesting variants of the prototypical spatial keyword query that satisfy different users' needs. Chen et al.~\cite{Chen:2013:EQI:2463676.2465328} study the problem of matching a stream of incoming Boolean range continuous queries against a stream of incoming geo-textual objects in real time. The keyword-aware optimal route query~\cite{Cao:2012:KOR:2350229.2350234} finds an optimal route that covers a set of user-specified keywords, meets a specified budget constraint, and achieves the best score according to a specific scoring function. The collective spatial keyword
query~\cite{DBLP:conf/sigmod/CaoCJO11,DBLP:journals/tods/CaoCGJO15,DBLP:conf/sigmod/LongWWF13} finds a group of objects that are close to a query point and that collectively cover a set of query keywords. Wu et al.~\cite{DBLP:journals/tods/WuYJ13,DBLP:conf/icde/WuYJC11} cover the problem of maintaining the result set of top-$k$ spatial keyword queries while the user (query location) moves continuously. The Location-aware top-$k$ Prestige-based Text retrieval (LkPT) query~\cite{DBLP:journals/pvldb/CaoCJ10} retrieves the top-$k$ spatial web objects ranked according to both prestige-based relevance and location proximity, where the prestige-based relevance captures both the textual relevance of an object to a query and the effects of nearby objects. The Reverse Spatial Textual $k$ Nearest Neighbor (RST$k$NN) query~\cite{DBLP:conf/sigmod/LuLC11} finds objects that take the query object as one of their $k$ most spatial-textual similar objects. 

Recent studies~\cite{DBLP:journals/pvldb/BoghSJ13,DBLP:journals/vldb/SkovsgaardJ15} target the common case where the user wishes to find nearby groups of points of interest that are relevant to the query keywords. Such groups are relevant to users who wish to conveniently explore several options before making a decision such as to purchase a particular product. The spatio-textual similarity join~\cite{DBLP:journals/pvldb/BourosGM12} retrieves the pairs of objects that are spatially close and textually similar.
%\cite{DBLP:journals/pvldb/ChoiCT12,DBLP:journals/pvldb/TaoHCC13} find a rectangualr spatial region such that the sum of the weights of the points covered by the region is maximized.
%
%
% 
% Liu et al.~\cite{DBLP:conf/cikm/LiuYS11} retrieves $k$ non-overlapping regions that have the highest scores computed by the number of feature objects inside.
Zhang et al.~\cite{DBLP:conf/icde/ZhangCMTK09,DBLP:conf/icde/ZhangOT10} study the query that takes a set of keywords and aims to find a set of geo-textual objects such that the union of their text descriptions cover all query keywords and such that the diameter of the objects is minimized. Considering the co-location relationship between objects, regions are retrieved so that the total weights of objects inside the result regions are maximized~\cite{DBLP:journals/pvldb/ChoiCT12,DBLP:journals/pvldb/TaoHCC13,DBLP:conf/cikm/LiuYS11}. However, these proposals impose specific shapes on the result regions, either fixed-sized rectangles or circles. A recent study~\cite{DBLP:journals/pvldb/CaoCJY14} computes a maximum-sum region where the road network distance between objects is less than a query constraint and the sum of the scores of the objects inside the region is maximized. This kind of query may retrieve a region containing many objects with low scores and ignore a promising region with few objects with high scores. Also, the query includes a query range that reduce the search space. The performance is unknown if the query range is set to cover  the whole data set.

It is well-known that clustering techniques are too expensive for interactive use. However, it is natural to try to leverage the work on clustering for the functionality that the paper targets. This paper is the first to explore the idea of performing clustering at query time. It thus explores a new direction in spatial keyword querying that represents an alternative to existing approaches. Our empirically study also demonstrates that it is indeed possible to adapt clustering to the paper's interactive setting. 

\subsection{Density-based Clustering Algorithms}\label{sec:clustering}

In density-based clustering, clusters are defined as regions with a higher density of objects than in the remainder of the  space covered by the data set. Objects in the sparse regions that separate clusters are usually considered to be noise and border objects. The most popular density based clustering method is DBSCAN~\cite{DBLP:conf/kdd/EsterKSX96}. It features a well-defined cluster model called ``density-reachability.'' It is based on connecting objects within a certain distance threshold of each other. A cluster consists of all density-connected objects  together with all objects that are within the threshold distance of these objects, and a cluster can have an arbitrary shape. OPTICS~\cite{Ankerst:1999:OOP:304182.304187} is a generalization of DBSCAN that removes the need to choose an appropriate value for the range parameter and that produces a hierarchical result related to that of linkage clustering. DBRS~\cite{DBLP:conf/pakdd/WangH03} improves DBSCAN by repeatedly picking an unclassified point at random and examining its neighborhood. VDBSCAN~\cite{4280175} aims for varied-density dataset analysis. Before adopting the traditional DBSCAN algorithm, some methods are used to select several values of parameter $\epsilon$ for different densities according to a $k$-dist plot. With different values of $\epsilon$, it is possible to find clusters with varied densities simultaneously. GDBSCAN~\cite{Sander:1998:DCS:593419.593465} clusters point objects as well as spatially extended objects according to both, their spatial and their nonspatial attributes. 

The techniques proposed in this paper are presented on the base of DBSCAN. However, it is possible to apply our approach to other density-based clustering models, since DBSCAN serves as the foundation for different variants of density-based clustering.

\section{Conclusions and Future Work}\label{sec:con}
This paper proposes a new type of query, namely the top-$k$ spatial textual clusters  ($k$-STC) query that returns the top-$k$ clusters that (i) are located the closest to the query location, that (ii) contain the objects whose text descriptions are the most relevant to the query keywords, and that (iii) have densities that exceed a threshold. Qualifying clusters are ranked according to a scoring function that takes into account the distance to the query location and the relevance to the query keywords. We propose a basic algorithm that is an on-line clustering method. To improve performance, we propose an advanced approach that includes three techniques: (i) a skipping rule that is used to reduce the number of objects to be examined, (ii) spatially gridded posting lists (SGPL) that are used to estimate the selectivity of the range queries so that sparse neighborhoods can be pruned at low cost, and (iii) a fast range query algorithm that leverages the SGPL.
Empirical studies on a real data set indicate that the paper's proposals are capable of excellent performance. 

This work opens to a number of interesting research directions that relate to the application of clustering to spatial-keyword querying. The basic algorithm contains an early stop condition that tries to obtain the top-$k$ clusters without finding all clusters from the whole data set. However, the tightness of the threshold in this early stop condition can be improved. It is of interest to derive a tighter threshold to save computation cost. The density requirements $\epsilon$ and $\mathit{minpts}$ in the query define the form of the clusters to be retrieved. However, users may not stick to a fixed density requirement, and the data set may have varying density, e.g., city center vs. countryside. A density requirement good for a city center may miss some results in the countryside. It is of interest to propose an algorithm that is able to automatically select suitable parameters for data sets with varying densities.

%ACKNOWLEDGMENTS are optional
%\section{Acknowledgments}
%This section is optional; it is a location for you
%to acknowledge grants, funding, editing assistance and
%what have you.  In the present case, for example, the
%authors would like to thank Gerald Murray of ACM for
%his help in codifying this \textit{Author's Guide}
%and the \textbf{.cls} and \textbf{.tex} files that it describes.

%
% The following two commands are all you need in the
% initial runs of your .tex file to
% produce the bibliography for the citations in your paper.

% You must have a proper ".bib" file
%  and remember to run:
% latex bibtex latex latex
% to resolve all references
%
% ACM needs 'a single self-contained file'!
%
%APPENDICES are optional
%\balancecolumns

\end{document}